\newtheoremstyle{theorem}{1em}{1em}{\slshape}{0pt}{\bfseries}{.}{ }{}
\theoremstyle{theorem}
\newtheorem{theorem}{Theorem}
\newtheorem*{theorem*}{Theorem}
\newtheorem{lemma}[theorem]{Lemma}
\newtheorem*{claim*}{Claim}
\newtheorem*{example*}{Example}
\newtheorem*{lemma*}{Lemma}
\newtheorem*{conjecture*}{Conjecture}
\theoremstyle{remark}
\newtheorem*{remark*}{Remark}
\providecommand{\setZ}{\mathbb{Z}}
\providecommand{\setR}{\mathbb{R}}
\newcommand{\E}{\mathop{\mathbb{E}}}
 \theoremstyle{definition}
  \theoremstyle{plain}
  \theoremstyle{plain}
  \theoremstyle{plain}
        \def\drawRect#1#2#3#4#5{
           \FPeval{\x2}{(#2) + (#4)} 
           \FPeval{\y2}{(#3) + (#5)} 
           \pspolygon[#1](#2,#3)(\x2,#3)(\x2,\y2)(#2,\y2)
        }
\DeclareMathAlphabet{\pazocal}{OMS}{zplm}{m}{n}
\title{Lecture Notes on the ARV Algorithm for Sparsest Cut}
\author{Thomas Rothvoss\thanks{Email: {\tt rothvoss@uw.edu}. Supported by NSF grant 1420180 with title ``\emph{Limitations of convex relaxations in combinatorial optimization}'' and an Alfred P. Sloan Research Fellowship.} \vspace{3mm} \\  University of Washington, Seattle}
\begin{document}

\maketitle

\begin{abstract}
One of the landmarks in approximation algorithms is the $O(\sqrt{\log n})$-approximation algorithm
for the \emph{Uniform Sparsest Cut problem} by Arora, Rao and Vazirani from 2004. 
The algorithm is based on a \emph{semidefinite program} that finds an embedding of the nodes 
respecting the triangle inequality. Their core argument shows that a random hyperplane approach will
find two large sets of $\Theta(n)$ many nodes each that have a distance of $\Theta(1/\sqrt{\log n})$
to each other if measured in terms of $\|\cdot \|_2^2$.

Here we give a detailed set of lecture notes describing the algorithm. For the proof of 
the \emph{Structure Theorem} we use a cleaner argument based on expected maxima over $k$-neighborhoods that significantly simplifies
the analysis.  
\end{abstract}


\section{Introduction}

Let $G = (V,E)$ be a complete, undirected graph on $|V| = n$ nodes and let $c : E \to \setR_{\geq 0}$
be a cost function on the edges. For a subset $S \subseteq V$ of nodes, let $\delta(S) := \{ \{ i,j\} \in E \mid |\{ i,j\} \cap S| = 1\}$ be the induced \emph{cut}. We abbreviate $c(\delta(S)) := \sum_{e \in \delta(S)} c(e)$ as the
cost of the cut. 
The \emph{(Uniform) Sparsest Cut} problem is then to find the cut that minimizes the \emph{cost-over-separated-pairs} ratio: 
\[
  \min \left\{ \frac{c(\delta(U))}{|U| \cdot |V \setminus U|} \mid \emptyset \subset U \subset V \right\}.
\]
There is also a non-uniform version of the problem where each pair $i,j \in V$ has an associated
\emph{demand} $d(i,j) \geq 0$ and one aims for the cut minimizing the ratio $c(\delta(S)) / d(\delta(S))$. 
We will now see the celebrated algorithm by Arora, Rao and Vazirani~\cite{DBLP:conf/stoc/AroraRV04}
that finds a $O(\sqrt{\log n})$-approximation in polynomial time. 

For the algorithm we will not try to optimize any constant. 
To fix some notation, we will denote any vector in bold font, like $\bm{v}_i \in \setR^m$. 
If we write $i \sim V$, then we mean that $i$ is a uniform random node from $V$.  
We denote $N(0,1)$ as the \emph{$1$-dimensional Gaussian distribution} with mean 0 and variance 1. 
In particular, a random variable $g \sim N(0,1)$ has \emph{density} $\frac{1}{\sqrt{2\pi}}e^{-x^2/2}$.
If we write $\bm{g} \sim N^m(0,1)$, then we mean that $\bm{g}$ is an \emph{$m$-dimensional Gaussian}.
Recall that the vector $\bm{g} = (g_1,\ldots,g_m)$ can be generated by sampling each coordinate independently
with $g_i \sim N(0,1)$. In reverse, for any pair of orthonormal vectors $\bm{u},\bm{v} \in \setR^m$, 
the inner products $\left<\bm{g},\bm{u}\right>,\left<\bm{g},\bm{v}\right>$ are independently distributed from $N(0,1)$.

\section{A semidefinite program}

Sparsest Cut is an unusual problem in the sense that it minimizes the ratio of two functions.  
Let us assume for the sake of simplicity that we guessed the cost $C^*$ and the size $S^*$ of an optimum cut, 
say with $S^* \leq \frac{n}{2}$. 
Then we define a \emph{semidefinite programming relaxation}
\begin{eqnarray*}
\begin{array}{rclll}
 \displaystyle \sum_{i \in V} \|\bm{v}_i\|_2^2 &=& S^* & & (I) \\
\displaystyle  \sum_{\{i,j\} \in {V \choose 2}} \|\bm{v}_i - \bm{v}_j\|_2^2 &=& S^* \cdot (n-S^*) & & (II) \\
 \displaystyle \sum_{\{i,j\} \in E} c_{ij}\|\bm{v}_i - \bm{v}_j\|_2^2 &=& C^* & & (III) \\
\displaystyle  \| \bm{v}_i - \bm{v}_j\|_2^2 &\leq& \|\bm{v}_i - \bm{v}_k\|_2^2 + \|\bm{v}_k - \bm{v}_j\|_2^2 & \forall i,j,k \in V \cup \{0\} & (IV) \\
 \displaystyle \|\bm{v}_i\|_2^2 &\leq& 1 & \forall i \in V & (V) \\
 \displaystyle \|\bm{v}_{0}\|_2^2 &=& 0 & & (VI)
\end{array}
\end{eqnarray*}
where we use an artificial index $0$ with $\bm{v}_0 := \bm{0}$, so that the triangle inequality 
also holds for the origin.
\begin{lemma}
If there is cut $U^* \subseteq V$ of cost $C^*$ and size $S^*$, then the above SDP has a solution. 
\end{lemma}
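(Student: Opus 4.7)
The plan is to exhibit an explicit one-dimensional vector solution arising directly from the cut $U^*$ and then verify each constraint by inspection. Specifically, I would pick any unit vector $\bm{e} \in \setR^m$ and set
\[
 \bm{v}_i = \bm{e} \quad \text{if } i \in U^*, \qquad \bm{v}_i = \bm{0} \quad \text{if } i \in V \setminus U^*,
\]
together with $\bm{v}_0 = \bm{0}$ as prescribed. This is essentially the cut metric lifted into $\setR^m$.

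With this choice, every squared distance $\|\bm{v}_i - \bm{v}_j\|_2^2$ is either $0$ (if $i,j$ lie on the same side of the cut, including the origin being grouped with $V \setminus U^*$) or $1$ (if they lie on opposite sides). Constraints (V) and (VI) are immediate. For (I), only the $|U^*| = S^*$ nodes in $U^*$ contribute, each with $\|\bm{v}_i\|_2^2 = 1$. For (II), the contributing pairs are exactly those with one endpoint on each side of $U^*$, and there are $S^* \cdot (n - S^*)$ of them, each contributing $1$. For (III), the contributing edges are exactly those in $\delta(U^*)$, so the sum equals $c(\delta(U^*)) = C^*$.

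The only part requiring a small amount of care is the triangle inequality (IV) on squared distances, and in particular that it also holds when one of $i,j,k$ is the artificial index $0$. I would handle this by case analysis based on which side of the cut each of the three indices lies on (treating $0$ as lying in $V \setminus U^*$). Since each squared distance is $0$ or $1$, the right-hand side can only equal $0$ when both involved pairs are on the same side, in which case all three indices lie on the same side and the left-hand side is also $0$. In every other case the right-hand side is at least $1$, which trivially dominates the left-hand side.

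I do not anticipate any real obstacle: the lemma is a sanity check that the SDP is indeed a relaxation, and the argument is a direct verification. The only subtlety worth emphasizing in the write-up is that the squared $\ell_2$ distances of a cut embedding do form a metric (equal to the cut pseudometric), which is what makes constraint (IV) hold without any additional work.
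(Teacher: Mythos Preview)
Your proposal is correct and follows essentially the same approach as the paper: define $\bm{v}_i$ to be the indicator of $i \in U^*$ (as a one-dimensional vector), note that all squared distances are $0$ or $1$, and verify the triangle inequality $(IV)$ by a short case split on cut membership. The paper's write-up is terser but the argument is the same.
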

\begin{proof}
One could choose 1-dimensional vectors by defining 
\[
  \bm{v}_i := \begin{cases} 1 & \textrm{if }i \in U^* \\ 0 & \textrm{if }i \notin U^*. \end{cases}
\]
Then the only non-trivial case is verifying the \emph{triangle inequalities}
in $(IV)$. These are satisfied by our choice of $\bm{v}_i \in \{ 0,1\}$ 
since if $\|\bm{v}_i - \bm{v}_j\|_2^2 = 1$, then $i$ and $j$ have to be on different sides of the cut $U^*$ and
any node $k$ has to be either not on the side of $i$ or not on the side of $j$. 
\end{proof}
We can solve the semi-definite program (SDP) in polynomial time~\cite{GroetschelLovaszSchrijver88}; let $\{\bm{v}_i\}_{i \in V} \subseteq \setR^m$
be the solution. 
Due to the triangle inequalities (IV) we can define a \emph{metric} $d : V \times V \to \setR_{\geq 0}$
by setting $d(i,j) := \|\bm{v}_i - \bm{v}_j\|_2^2$. Note that while $\|\cdot \|_2$ is always a metric, $\|\cdot \|_2^2$ is
 not a metric on all points sets. For sets of nodes $A,B \subseteq V$ 
we define $d(i,A) := \min\{ d(i,j) : j \in A\}$ and $d(A,B) := \min\{ d(i,j) : i \in A, j \in B\}$.

\section{A ball rounding scheme}

Given that family $\{\bm{v}_i\}_{i \in V}$ of SDP vectors, there are several natural rounding 
procedures that would come to mind. For example one could try the \emph{hyperplane rounding}
that Goemans and Williamson~{\cite{DBLP:conf/stoc/GoemansW94} have used for MaxCut. 
The natural algorithm for Sparsest Cut
would be to take a random Gaussian $\bm{g} \sim N^m(0,1)$ and set $U := \{ i \in V \mid \left<\bm{g},\bm{v}_i\right> \geq 0\}$. Assume for the 
sake of simplicity that we are in the \emph{balanced} case of Sparsest Cut
with $S^* = \Theta(n)$. Then an edge $(i,j) \in E$ has a contribution to the objective function of $\Theta(d(i,j) / n^2)$. On the other hand, the probability that $(i,j)$ is separated is roughly proportional to the
Euclidean distance $\|\bm{v}_i - \bm{v}_j\|_2$ and even if the hyperplane generates perfectly balanced cuts,  
the expected contribution of an edge $(i,j)$ to the hyperplane cut would be $\Theta(\sqrt{d(i,j)} / n^2)$. 
In other words, short edges would be separated far too likely.

The second best idea that one might have, would be to select a node $i \in V$ and take a random cut $U := \{ j \in V \mid d(i,j) \leq r\}$ where $r \sim [0,1]$. Now every edge is cut with a probability not exceeding $d(i,j)$. 
On the other hand, this argument seems to not give any guarantee on the size of $U$ and $V \setminus U$, 
hence the objective function can be arbitrarily bad again. 
But a slight fix of this rounding argument can work. We only need a large ``core'' of nodes 
so that the remaining nodes still have a decently large distance to it.

\begin{lemma} \label{lem:LargeCoreRounding}
Suppose we have a set of nodes $A \subseteq V$ with $|A| \geq \alpha n$ and $\sum_{i \in V} d(i,A) \geq \beta \cdot S^*$. 
Then the best cut of the form $\{ i \in V \mid d(i,A) \leq r\}$ is a $\frac{2}{\alpha \beta}$-approximation. 
\end{lemma}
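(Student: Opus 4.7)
The plan is a standard region-growing averaging argument: sample a threshold $r$ uniformly from $[0,L]$, consider the candidate cut $U_r := \{j \in V : d(j,A)\leq r\}$, and show that averaging over $r$ already forces some $r^*$ to attain the claimed guarantee. Because $A \subseteq U_r$ automatically, we get $|U_r|\geq |A|\geq \alpha n$ for free, so the only real task is to control the ratio $c(\delta(U_{r^*}))/|V\setminus U_{r^*}|$; the missing factor of $|U_{r^*}|$ in the sparsest cut objective is then accounted for by this $\alpha n$ lower bound.

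For the averaging I would estimate two integrals. The numerator $\int_0^L c(\delta(U_r))\, dr$ is at most $C^*$: an edge $\{i,j\}$ lies in $\delta(U_r)$ precisely when $r$ falls strictly between $d(i,A)$ and $d(j,A)$, a set of measure at most $|d(i,A)-d(j,A)|$, which is $\leq d(i,j)$ by the triangle inequality (IV) (apply $d(j,A)\leq d(j,i)+d(i,a)$ for the minimizer $a\in A$ of $d(i,A)$, and symmetrically). Summing with weights $c_{ij}$ recovers the bound from (III). The denominator $\int_0^L |V\setminus U_r|\, dr = \sum_{j\in V}\min(L, d(j,A))$ is where the only subtlety lies: the cap $\min(L,\cdot)$ must not swallow the hypothesis $\sum_j d(j,A)\geq \beta S^*$. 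I would pick $L$ exceeding $\max_j d(j,A)$. Combining $\|\bm{v}_j\|_2^2\leq 1$ from (V) with (IV) applied through the origin $\bm{v}_0=\bm{0}$ gives $d(j,a)\leq d(j,0)+d(0,a)\leq 2$ for any $a\in A$, so $L=2$ works; the cap is inactive and $\int_0^2 |V\setminus U_r|\, dr = \sum_j d(j,A)\geq \beta S^*$.

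Averaging then produces some $r^*\in[0,2]$ with $c(\delta(U_{r^*}))/|V\setminus U_{r^*}| \leq C^*/(\beta S^*)$ (otherwise the ratio would exceed this everywhere, contradicting the integral bounds). Combining with $|U_{r^*}|\geq \alpha n$ and the assumption $n-S^*\geq n/2$ gives
\[
  \frac{c(\delta(U_{r^*}))}{|U_{r^*}|\cdot|V\setminus U_{r^*}|}\;\leq\;\frac{C^*}{\alpha n\cdot \beta S^*}\;\leq\;\frac{2}{\alpha\beta}\cdot\frac{C^*}{S^*(n-S^*)},
\]
which matches the advertised $(2/(\alpha\beta))$-approximation since $C^*/(S^*(n-S^*))$ is the value of the guessed optimum. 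The only nonobvious step is recognizing that (IV)+(V)+(VI) already force $d(j,A)\leq 2$, so no truncation is lost when extracting $\beta S^*$ from the denominator integral; the rest is standard region-growing bookkeeping.
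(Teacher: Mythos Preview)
Your argument is correct and follows essentially the same region-growing scheme as the paper: bound the expected cut cost by $C^*$ via the triangle inequality, bound the expected complement size using $d(j,A)\le 2$, and extract a good threshold via the ``minimum ratio $\le$ ratio of expectations'' fact. The only cosmetic difference is that you integrate over $[0,2]$ (so the cap $\min(L,\cdot)$ is inactive) whereas the paper samples $r\sim[0,1]$ and absorbs the overflow into a factor $\tfrac12$; in fact your route yields $\frac{C^*}{\alpha\beta\, n\, S^*}\le \frac{1}{\alpha\beta}\cdot\frac{C^*}{S^*(n-S^*)}$ already without invoking $n-S^*\ge n/2$, so the extra factor $2$ you insert at the end is unnecessary.
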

\begin{proof}
Recall that the optimum value of the objective function is $\frac{C^*}{S^* \cdot (n-S^*)}$. 
Suppose we sample $r \sim [0,1]$ and take $U := U(r) := \{ i \in V \mid d(i,A) \leq r\}$ as a random cut. 
\begin{center}
\psset{unit=1.4cm}
\begin{pspicture}(-2,-1.5)(2,1.5)
\psellipse[fillcolor=lightgray,fillstyle=solid,linestyle=dashed](0,0)(1,1.5)
\psellipse[fillcolor=gray,fillstyle=solid](0,0)(0.5,1)
\pnode(0.5,0){A} \pnode(1.0,0){B}
\ncline{<->}{A}{B} \naput{$r$}
\rput[c](0,0){$A$}
\rput[l](0.8,1){$U := U(r)$}
\end{pspicture}
\end{center}
Then for an edge $(i,j) \in E$, say with $d(i,A) \leq d(j,A)$, we have 
\[
  \Pr_{r \sim [0,1]}[(i,j) \in \delta(U)] = \Pr_{r \sim [0,1]}[ d(i,A) \leq r \leq d(j,A) ] 
  \stackrel{\textrm{triangle inequality}}{\leq} d(i,j).
\]
Hence the expected cost of the cut $U$ is
\[
  \E_{r \sim [0,1]}[c(\delta(U))] \leq \sum_{(i,j) \in E} d(i,j) = C^*.
\]
Note that in any case $|U| \geq |A| \geq \alpha n$. We know that
$d(i,A) \leq 2$ for all $i \in V$ and hence $\Pr_{r \sim [0,1]}[i \notin U] = \Pr_{r \sim [0,1]}[r < d(i,A)] \geq \frac{1}{2}d(i,A)$.
Hence 
\[
  \E_{r \sim [0,1]}[|V \setminus U|] \geq \frac{1}{2} \sum_{i \in V} d(i,A) \geq \frac{\beta}{2} \cdot S^*.  
\]
Then $\E_{r \sim [0,1]}[|U| \cdot |V \setminus U|] \geq \frac{\alpha \beta}{2} \cdot S^* \cdot n$.
In other words, the random cut seems to have the right expected nominator and denominator to satisfy the claim. 
But this is \emph{not} enough to argue that their ratio satisfies $\E_{r \sim [0,1]}[\frac{c(\delta(U))}{|U| \cdot |V \setminus U|}] \leq \frac{2}{\alpha \beta} \cdot \frac{C^*}{S^* \cdot (n-S^*)}$. The following insight comes to rescue: 
\begin{quote}
{\bf Fact.} Let $\bm{a},\bm{b} \in \setR_{\geq 0}^m$ be non-negative numbers and $\pazocal{D}$ be a distribution over 
indices in $[m]$. Then 
\[
  \min_{i\in \{ 1,\ldots,m\}} \left\{ \frac{\bm{a}_i}{\bm{b}_i} \right\} \leq \frac{\E_{i \sim \pazocal{D}}[\bm{a}_i]}{\E_{i \sim \pazocal{D}}[\bm{b}_i]}.
\]
\end{quote}
Now, this fact implies that \emph{best} choice of $U$ (over all $r \in [0,1]$) will indeed satisfy the claim and the
lemma is proven.
\end{proof}
We should also remark that if $A$ is given, we can find the cut $U$ in polynomial time
as we only need to try out at most $n^2$ many values of $r$.

\section{The case of heavy clusters}

Let $B(i,r) := \{ j \in V \mid d(i,j) \leq r\}$ be the ``ball'' of radius $r$ around node $i$. 
A slight annoyance of the ARV algorithm is that it requires a case split. If we can find a 
\emph{cluster center} $i^* \in V$, 
then we can use Lemma~\ref{lem:LargeCoreRounding} to get a constant factor approximation
by just taking a ball around the center $i^*$. 
\begin{lemma}
Suppose there is a node $i^* \in V \cup \{0\}$ with $|B(i^*,\frac{1}{8} \cdot \frac{S^*}{n})| \geq \frac{n}{4}$. Then in polynomial time 
one can find a cut that gives a $O(1)$-approximation. 
\end{lemma}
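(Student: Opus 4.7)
The plan is to invoke Lemma~\ref{lem:LargeCoreRounding} with the set $A := B(i^*, r_0) \subseteq V$, where $r_0 := \frac{1}{8} \cdot \frac{S^*}{n}$. By hypothesis $|A| \geq \frac{n}{4}$, so we may take $\alpha = \frac{1}{4}$. All that remains is to verify $\sum_{i \in V} d(i, A) \geq \beta \cdot S^*$ for some absolute constant $\beta > 0$, after which a best threshold cut around $A$ will be an $\frac{2}{\alpha \beta} = O(1)$-approximation found in polynomial time by scanning the at most $n^2$ distinct threshold values of $r \in [0,1]$.

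The key observation is that the triangle inequality (IV) applied through the pivot $i^*$ forces the ``diameter'' of $A$ to be tiny: for every $p, q \in A$,
\[
  d(p, q) \;\leq\; d(p, i^*) + d(i^*, q) \;\leq\; 2 r_0.
\]
Hence, for any $i, j \in V$, letting $p, q \in A$ be points closest in $A$ to $i$ and $j$ respectively, two further applications of (IV) yield
\[
  d(i, j) \;\leq\; d(i, p) + d(p, q) + d(q, j) \;\leq\; d(i, A) + 2 r_0 + d(j, A).
\]

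Summing this estimate over all $n^2$ ordered pairs $(i,j) \in V \times V$, the left side equals $2 S^*(n - S^*)$ by constraint (II), which is at least $n S^*$ because $S^* \leq n/2$. The right side simplifies to $2n \sum_{i \in V} d(i, A) + 2 n^2 r_0 = 2n \sum_i d(i, A) + \frac{n S^*}{4}$. Rearranging gives $\sum_i d(i, A) \geq \frac{3}{8} S^*$, so $\beta = \frac{3}{8}$ works and Lemma~\ref{lem:LargeCoreRounding} delivers a $\frac{64}{3}$-approximation.

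The only non-routine step is the ``diameter trick'' of the second paragraph: it is what converts constraint (II), a global lower bound on all pairwise squared distances, into a lower bound on the sum of distances to the single set $A$. Everything else is arithmetic, and the argument is uniform in whether $i^* \in V$ or $i^* = 0$, since (IV) is stated for arbitrary pivots in $V \cup \{0\}$.
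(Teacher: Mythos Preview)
Your proof is correct and follows essentially the same approach as the paper: set $A := B(i^*, \tfrac{1}{8}\cdot\tfrac{S^*}{n})$, use the triangle inequality through $i^*$ (you route it via nearest points $p,q\in A$ and bound $d(p,q)\le 2r_0$, which amounts to the same inequality $d(i,j)\le d(i,A)+d(j,A)+2r_0$), combine with constraint~(II) to get $\sum_i d(i,A)\ge \Omega(S^*)$, and apply Lemma~\ref{lem:LargeCoreRounding}. Your bookkeeping is slightly tighter (you obtain $\beta=\tfrac{3}{8}$ rather than the paper's $\beta=\tfrac{1}{8}$, since the paper drops a harmless factor of~$2$ when passing from~(II) to the average pairwise distance), but otherwise the arguments coincide.
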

\begin{proof}
We set $A := B(i^*, \frac{1}{8} \cdot \frac{S^*}{n})$. 
Then by assumption $|A| \geq \frac{n}{4}$. 
Moreover, bounding the average distance of pairs of nodes from above and from below gives
\[
 \frac{1}{2} \cdot \frac{S^*}{n} \leq \frac{S^*}{n} \cdot \underbrace{\frac{n-S^*}{n}}_{\geq 1/2} \stackrel{(II)}{\leq} \E_{i,j \sim V}[d(i,j)] \stackrel{\textrm{triangle ineq.}}{\leq} 2\cdot \E_{i \sim V} \Big(d(i,A)+\frac{1}{8} \cdot \frac{S^*}{n}\Big).
\]
This can be rearranged to $\E_{i \sim V}[d(i,A)] \geq \frac{1}{8} \cdot \frac{S^*}{n}$.
We obtain a $64$-approximation by applying Lemma~\ref{lem:LargeCoreRounding}. 
\end{proof}

\section{An algorithm for the main case}

From now on we make the assumption that no cluster exists: 
\[
  \left|B\Big(i,\frac{1}{8} \cdot \frac{S^*}{n}\Big)\right| < \frac{n}{4} \quad \forall i \in V \cup \{ 0\}.
\]
We will prove that in this case, there are sets $L,R \subseteq V$ of size $|L|,|R| \geq \Omega(n)$
with $d(L,R) \geq \Delta \cdot \frac{S^*}{n}$ for $\Delta := \Theta(1/\sqrt{\log n})$. Then choosing Lemma~\ref{lem:LargeCoreRounding} with $A := L$ 
will give a $O(\frac{1}{\Delta})$-approximation. 
Before we start proving this, we want to further simplify the situation. 
Note that by $(I)$ we have $\E_{i \sim V}[d(i,\bm{0})] = \frac{S^*}{n}$, and hence at most half the nodes can have a distance of 
more than $2 \cdot \frac{S^*}{n}$ to $\bm{0}$. Moreover we have $|B(\bm{0},\frac{1}{8} \cdot \frac{S^*}{n})| \leq \frac{n}{4}$. 
Then we only loose a constant factor if we delete
those nodes and assume that $\frac{1}{8} \cdot \frac{S^*}{n} \leq d(i,\bm{0}) \leq 2 \cdot \frac{S^*}{n}$ for all remaining nodes. 
Next we scale the vectors $\bm{v}_i$ by a factor
of $\sqrt{n / S^*}$, which scales the distances $d(i,j)$ by a factor of $\frac{n}{S^*}$. 
After this transformation it suffices to prove the following structure theorem: 

\begin{theorem}[ARV Structure Theorem]
Given any set of $|V| = n$ vectors $\{\bm{v}_i\}_{i \in V} \subseteq \setR^m$ with $\frac{1}{8} \leq \|\bm{v}_i\|_2^2 \leq 2$ and 
$|B(i,\frac{1}{8})| \leq \frac{3}{4}n$ for all $i \in V$ that satisfy the triangle inequalities
\[
  \|\bm{v}_i - \bm{v}_j\|_2^2 \leq \|\bm{v}_i - \bm{v}_k\|_2^2 + \|\bm{v}_k - \bm{v}_j\|_2^2  \quad \forall i,j,k \in V.
\]
Then there is a polynomial time algorithm that with constant
probability finds sets $L,R \subseteq V$ of size $|L|,|R| \geq \Omega(n)$ with $d(L,R) \geq \Delta$
for $\Delta := \Theta(1/\sqrt{\log n})$.
\end{theorem}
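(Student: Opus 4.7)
The plan is to round by a random Gaussian projection, threshold the projection values to define $L$ and $R$, control their sizes by anti-concentration, and certify the gap $d(L,R)\ge\Delta$ by contradiction using an expected-maximum bound on $k$-point neighborhoods.

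\textbf{Step 1 (Gaussian projection).} First I would sample $\bm{g}\sim N^m(0,1)$, set $x_i := \langle\bm{g},\bm{v}_i\rangle$ for every $i\in V$, and for a small constant $\sigma>0$ to be determined define
\[
  L := \{i\in V : x_i \le -\sigma\}, \qquad R := \{i\in V : x_i \ge \sigma\}.
\]
Because $\|\bm{v}_i\|_2^2\ge 1/8$, each $x_i$ is a centered Gaussian of variance $\ge 1/8$, so for $\sigma$ small enough $\Pr[x_i\ge\sigma]=\Omega(1)$ uniformly in $i$, and $\E[|L|],\E[|R|] = \Omega(n)$.

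\textbf{Step 2 ($|L|,|R|$ are linear in $n$).} The trivial bound $\E[|R|^2]\le n^2$ combined with $(\E[|R|])^2=\Omega(n^2)$ gives, via Paley--Zygmund, $|R|\ge\Omega(n)$ with constant probability; symmetrically for $L$. So with constant probability both $|L|$ and $|R|$ are $\Omega(n)$.

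\textbf{Step 3 (The gap $d(L,R)\ge\Delta$).} This is the crux. Set $\Delta := c/\sqrt{\log n}$ for a small constant $c>0$. Assume toward contradiction that $d(L,R)<\Delta$ with probability $1-o(1)$. Conditioning on the $\Omega(1)$-probability event from Step~2, I would greedily peel off pairs $(i,j)\in L\times R$ with $d(i,j)<\Delta$, producing a matching $M$ of size $|M|=\Omega(n)$ in which every edge has squared distance $<\Delta$ and projection gap $\ge 2\sigma$. The main analytic input is the $k$-neighborhood Gaussian maximum inequality: for every $i$ and every $T\subseteq V$ with $d(i,j)\le\Delta$ for $j\in T$,
\[
  \E\Bigl[\max_{j\in T}\langle\bm{g},\bm{v}_j-\bm{v}_i\rangle\Bigr] \le \sqrt{2\Delta\log |T|},
\]
since each $\langle\bm{g},\bm{v}_j-\bm{v}_i\rangle$ is Gaussian with variance $d(i,j)\le\Delta$. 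Combined with a chaining step that concatenates matched edges into paths $i_0,i_1,\ldots,i_k$ of squared-distance diameter $O(k\Delta)$ (by the triangle inequality on $\|\cdot\|_2^2$) but total projection gap $\ge 2\sigma k$, one obtains $(2\sigma k)^2 \lesssim k\Delta\log n$, i.e.\ $\Delta \gtrsim \sigma^2 k/\log n$. Choosing $k=\Theta(\sqrt{\log n})$ produces the desired $\Delta=\Omega(1/\sqrt{\log n})$, contradicting the contrary assumption. The non-clustering hypothesis $|B(i,1/8)|\le 3n/4$ enters to prevent all of the neighborhood mass from collapsing onto a single point and to guarantee that chains of the required length actually exist inside $M$.

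\textbf{Main obstacle.} Step~3 is clearly the hard part. A crude pairwise union bound over all $\binom{n}{2}$ pairs yields only $\Delta=O(1/\log n)$; the quadratic improvement to $\Delta=O(1/\sqrt{\log n})$ requires replacing the pairwise union by an expected maximum over each $k$-neighborhood and then chaining matched edges, with the triangle inequality on $\|\cdot\|_2^2$ controlling how the projection gaps add up. This is exactly the step the lecture notes promise to streamline relative to the original ARV matching-based argument.
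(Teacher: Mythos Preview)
Your outline has the right skeleton, but there are genuine gaps in both Steps~2 and~3, and you misplace where the non-clustering hypothesis is used.

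\textbf{Step 2.} Separate Paley--Zygmund bounds on $|L|$ and $|R|$ do \emph{not} give both large simultaneously with constant probability. Consider the degenerate case where all $\bm{v}_i$ are equal: then $\E[|L|]=\E[|R|]=\Omega(n)$, yet $|L|\cdot|R|=0$ always. This configuration is ruled out precisely by $|B(i,\tfrac18)|\le\tfrac34 n$, so that hypothesis belongs here, not in Step~3. The paper uses it directly: for each $i$ there are $\ge n/4$ nodes $j$ with $d(i,j)\ge\tfrac18$, and a short angle argument (the triangle inequality forces the angle at $\bm{v}_j$ in the triangle $\bm{0},\bm{v}_i,\bm{v}_j$ to be non-obtuse) shows $\Pr[i\in L,\,j\in R]\ge\Omega(1)$, hence $\E[|L|\cdot|R|]=\Omega(n^2)$.

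\textbf{Step 3.} The sentence ``concatenates matched edges into paths $i_0,\ldots,i_k$ \ldots\ total projection gap $\ge 2\sigma k$'' is the whole theorem, and as stated it cannot work: for a fixed $\bm{g}$ the set $M(\bm{g})$ is a \emph{matching}, so its edges are vertex-disjoint and nothing can be concatenated. To accumulate a projection gap of order $k$ you must combine information across \emph{different} realizations of $\bm{g}$, and this coupling is the heart of the argument. The paper does it via an \emph{extend-or-expand} lemma on the deterministic potential $\mu_{i,k}:=\E_{\bm{g}}\bigl[\max_{j\in\Gamma_k(i)}\langle\bm{g},\bm{v}_i-\bm{v}_j\rangle\bigr]$: first one reduces to the case where every node is matched (out and in) with probability $\ge c$; then, for any $U$ with $\mu_{i,k}\ge\delta$ on $U$, either a $\Theta(c)$-fraction of $\Gamma(U)$ satisfies $\mu_{\cdot,k+1}\ge\delta+1$, or $|\Gamma(U)|\ge(4/c)|U|$. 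Iterating $k=\Theta(1/\Delta)$ times forces $\mu_{i,2k}\gtrsim k$ on a nonempty set, which collides with your upper bound $\mu_{i,2k}\le O(\sqrt{k\Delta\log n})$ exactly when $\Delta=\Theta(1/\sqrt{\log n})$. The indispensable bridge your outline omits is Gaussian concentration for the $\sqrt{k\Delta}$-Lipschitz function $\bm{g}\mapsto\max_{j\in\Gamma_k(i)}\langle\bm{g},\bm{v}_i-\bm{v}_j\rangle$: it is what converts the constant-probability event ``$j$ receives a matched edge from $U$'' into the deterministic statement $\mu_{j,k+1}\ge\delta+1$, thereby allowing the argument to iterate over fresh Gaussians at each step rather than trying to chain within a single $M(\bm{g})$.
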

From now on, we complete ignore the cost function and only use the properties given in 
the Structure Theorem.
Such sets $L$ and $R$ with $d(L,R) \geq \Delta$ are called \emph{$\Delta$-separated}. 
Let $c'>0$ be a small enough constants. 
The algorithm to produce the $\Delta$-separated sets is as follows:

\begin{center}
\psframebox{%
\begin{minipage}{15cm}
\noindent {\bf Well-separated sets algorithm}  \vspace{1mm} \hrule \vspace{2mm}  
\noindent {\bf Input: } Vectors $\{\bm{v}_i\}_{i \in V} \subseteq \setR^m$ satisfying the triangle inequality with $\frac{1}{8} \leq \|\bm{v}_i\|_2^2 \leq 2$ and $|B(i,\frac{1}{8})| \leq \frac{3}{4}n$ for all $i \in V \cup \{0\}$.   \vspace{0.5mm} \\
\noindent {\bf Output: } Either $\Delta$-separated sets $L',R'$ of size $|L'|,|R'| \geq \frac{c'}{2}n$ or $\texttt{FAIL}$  \vspace{1mm} \hrule  \vspace{1mm}
\begin{enumerate*}
\item[(1)] Pick a random Gaussian $\bm{g} \sim N^m(0,1)$
\item[(2)] Set $L := \{ i \in V \mid \left<\bm{v}_i,\bm{g}\right> \leq -1\}$ and $R := \{ i \in V \mid \left<\bm{v}_i,\bm{g}\right> \geq 1\}$
\item[(3)] If either $|L| \leq c'n$ or $|R| \leq c'n$ then  $\texttt{FAIL}$
\item[(4)] Compute any inclusion-wise maximal matching $M(\bm{g}) \subseteq \{ (i,j) \in L \times R \mid d(i,j) \leq \Delta\}$
\item[(5)] If $|M(\bm{g})| > \frac{c'}{2}n$ then  $\texttt{FAIL}$
\item[(6)] Return $L' := \{ i \in L \mid i\textrm{ not covered by }M(\bm{g})\}$ and $R' := \{ i \in V \mid i \textrm{ not covered by } M(\bm{g})\}$
\end{enumerate*}
\end{minipage}
}
\end{center}
Observe that any pair $i \in L'$ and $j \in R'$ that remains will have $d(i,j) > \Delta$
as otherwise the matching $M(\bm{g})$ would not have been maximal. Also, if the algorithm 
reaches (6), then $\min\{ |L'|,|R'| \} \geq \frac{c'}{2}n$.
\begin{center}
\psset{unit=1.2cm}
\begin{pspicture}(-2,-2.5)(2,2.5)
\drawRect{linearc=0.2,linestyle=dashed,fillstyle=solid,fillcolor=lightgray}{-2.5}{-2}{1.8}{4}
\drawRect{linearc=0.2,linestyle=dashed,fillstyle=solid,fillcolor=lightgray}{0.7}{-2}{1.8}{4}
\psellipse[fillstyle=solid,fillcolor=gray](-2,0)(0.4,1.8)
\psellipse[fillstyle=solid,fillcolor=gray]( 2,0)(0.4,1.8)
\cnode*(-1, 1){2.5pt}{ML1}
\cnode*(-1, 0){2.5pt}{ML2}
\cnode*(-1,-1){2.5pt}{ML3}
\cnode*(-2, 1.5){2.5pt}{L1}
\cnode*(-2, 0.5){2.5pt}{L2}
\cnode*(-2,-0.5){2.5pt}{L3}
\cnode*(-2,-1.5){2.5pt}{L4}
\cnode*( 1, 1){2.5pt}{MR1}
\cnode*( 1, 0){2.5pt}{MR2}
\cnode*( 1,-1){2.5pt}{MR3}
\cnode*( 2, 1.5){2.5pt}{R1}
\cnode*( 2, 0.5){2.5pt}{R2}
\cnode*( 2,-0.5){2.5pt}{R3}
\cnode*( 2,-1.5){2.5pt}{R4}
\psline[linewidth=1.5pt](-0.5,-2.5)(-0.5,2.5)
\psline[linewidth=1.5pt]( 0.5,-2.5)( 0.5,2.5)
\ncline{ML1}{MR1}
\ncline{ML2}{MR2}
\ncline{ML3}{MR3}
\nput[labelsep=2pt]{90}{ML3}{$i$}
\nput[labelsep=2pt]{90}{MR3}{$j$}
\naput[labelsep=2pt]{$\in M(\bm{g})$}
\nbput[labelsep=2pt]{$d(i,j) \leq \Delta$}
\rput[c](-1.5,2.3){$L$}
\rput[c]( 1.5,2.3){$R$}
\rput[c](-2,1){$L'$}
\rput[c]( 2,1){$R'$}
\pnode(0.5,-2.3){A} \pnode(2.5,-2.3){B} \ncline{->}{A}{B} \nbput[labelsep=2pt]{$\bm{g}$}
\end{pspicture}
\end{center}

The first step is to argue that with constant probability both $L$ and $R$ have size at least $\Omega(n)$.
\begin{lemma}
There is an absolute constant $c' >0$ so that $\Pr_{\bm{g} \sim N^m(0,1)}[\min\{ |L|,|R|\} \geq c'n] \geq c'$.
\end{lemma}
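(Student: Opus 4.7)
The plan is a second-moment (Paley--Zygmund) argument applied to an appropriate non-negative random variable.

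First, for each $i$, the projection $\langle \bm{v}_i, \bm{g}\rangle$ is a centered Gaussian of variance $\|\bm{v}_i\|_2^2 \in [1/8, 2]$, so
\[
\Pr[\langle \bm{v}_i, \bm{g}\rangle \geq 1] \geq \Pr[N(0,1) \geq 2\sqrt{2}] =: q > 0,
\]
a universal positive constant. Summing, $\E[|R|] \geq qn$, and by the symmetry $\bm{g} \leftrightarrow -\bm{g}$ we also have $\E[|L|] \geq qn$.

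To obtain both $|L|$ and $|R|$ linearly large simultaneously, I would apply Paley--Zygmund to the product $|L|\cdot|R|$. Since $|L|+|R| \leq n$ gives $|L|\cdot|R| \leq n^2/4$, we have $\E[(|L|\cdot|R|)^2] \leq (n^2/4) \cdot \E[|L|\cdot|R|]$, so
\[
\Pr\!\left[|L|\cdot|R| \geq \tfrac{1}{2} \E[|L|\cdot|R|]\right] \geq \Omega\!\left(\frac{\E[|L|\cdot|R|]}{n^2}\right).
\]
On this event, since $\max(|L|,|R|) \leq n$, we get $\min(|L|,|R|) \geq |L|\cdot|R|/n$. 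Hence if $\E[|L|\cdot|R|] \geq \alpha n^2$ for some constant $\alpha > 0$, then $\min(|L|,|R|) \geq \alpha n/2$ with probability $\Omega(\alpha)$, proving the lemma after adjusting $c'$.

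The main task--and the main obstacle--is to lower-bound
\[
\E[|L|\cdot|R|] = \sum_{i \neq j} \Pr[\langle \bm{v}_i, \bm{g}\rangle \leq -1,\, \langle \bm{v}_j, \bm{g}\rangle \geq 1].
\]
Each summand is a bivariate-Gaussian probability depending on the correlation $\rho_{ij} = \langle \bm{v}_i, \bm{v}_j\rangle / (\|\bm{v}_i\|_2 \|\bm{v}_j\|_2)$ and on the norms $\|\bm{v}_i\|_2, \|\bm{v}_j\|_2 \in [1/(2\sqrt{2}), \sqrt{2}]$; it is a continuous, positive function of $\rho_{ij}$ whenever $\rho_{ij} < 1$ and vanishes as $\rho_{ij} \to 1$. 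The hypothesis $|B(i,1/8)| \leq \tfrac{3}{4} n$ guarantees, for every $i$, a linear-in-$n$ number of nodes $j$ with $\|\bm{v}_i - \bm{v}_j\|_2^2 > 1/8$; combined with the norm bounds (and, if needed, the triangle inequalities on $V$, which prevent the vectors from clustering arbitrarily tightly along a single ray while remaining geometrically spread), this forces $\rho_{ij}$ strictly below $1$ for a constant fraction of pairs. Each such pair contributes at least a universal constant $\beta > 0$ to the sum, giving $\E[|L|\cdot|R|] = \Omega(n^2)$. The technical crux is the explicit bivariate-Gaussian calculation producing $\beta$, together with the pigeon-hole counting that converts the ball condition into $\Omega(n^2)$ contributing pairs.
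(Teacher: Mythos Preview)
Your overall strategy matches the paper's: show $\E[|L|\cdot|R|] = \Omega(n^2)$ and deduce the lemma (your Paley--Zygmund step just makes explicit what the paper states in one line). The genuine gap is in the ``technical crux'' you defer. Your claim that $d(i,j) > 1/8$ together with the norm bounds forces $\rho_{ij}$ to be bounded away from $1$ is false: take $\bm{v}_j = \tfrac{1}{4}\bm{v}_i$ with $\|\bm{v}_i\|_2^2 = 2$; then $\|\bm{v}_j\|_2^2 = 1/8$, $d(i,j) = 9/8 > 1/8$, yet $\rho_{ij} = 1$ and $\Pr[i\in L,\, j\in R]=0$. So the norm and distance hypotheses alone cannot produce a universal $\beta>0$, and your parenthetical about ``triangle inequalities on $V$'' does not help for a single pair $(i,j)$, since a triangle inequality requires a third point.

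What the paper actually uses is the triangle inequality \emph{through the origin} (recall $\bm{v}_0=\bm{0}$ is part of the SDP), namely $\|\bm{v}_i\|_2^2 \le \|\bm{v}_j\|_2^2 + \|\bm{v}_i-\bm{v}_j\|_2^2$ and its symmetric version; geometrically, the angle at $\bm{v}_j$ in the triangle $\bm{0},\bm{v}_i,\bm{v}_j$ is non-obtuse. Decomposing $\bm{v}_j$ into its projection onto $\bm{v}_i$ plus an orthogonal component $\bm{w}$, this non-obtuseness together with $d(i,j)\ge 1/8$ and $\|\bm{v}_j\|_2^2\ge 1/8$ yields $\|\bm{w}\|_2\ge 1/4$. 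Then $\langle\bm{g},\bm{v}_i\rangle$ and $\langle\bm{g},\bm{w}\rangle$ are \emph{independent} Gaussians with variances in a fixed range, and one reads off a constant lower bound for $\Pr[i\in L,\, j\in R]$ directly. In your language, this is precisely the missing ingredient that keeps the bivariate Gaussian away from the degenerate boundary; note that my counterexample above indeed violates the origin triangle inequality ($2 > 1/8 + 9/8$). Your counting of $\Omega(n^2)$ candidate pairs via the ball condition is fine; it is the per-pair bound that needs this extra geometric input.
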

\begin{proof}
We will prove that $\E[|L| \cdot |R|] \geq \Omega(n^2)$ which then implies the claim.
Fix any $i \in V$ and select one of the at least $\frac{1}{4}n$ nodes $j$ with $d(i,j) \geq \frac{1}{8}$. 
Let $\bm{w}$ be the orthogonal projection of $\bm{v}_j$ on $\bm{v}_i$, see figure below. 
Let $\alpha \in [0,\frac{\pi}{2}]$ be the angle between $\bm{v}_i-\bm{v}_j$ and $\bm{w}$ and let $\beta \in [0,\frac{\pi}{2}]$ be the angle between $\bm{v}_j$ and $\bm{w}$. 
Due to the triangle inequalities, the angle spanned by the points $\bm{0}$, $\bm{v}_i$ and $\bm{v}_j$ is non-obtuse
and $\alpha+\beta \leq \frac{\pi}{2}$. 
Then we have either $\alpha \leq \frac{\pi}{4}$ and
\[
  \|\bm{w}\|_2 = \underbrace{\cos(\alpha)}_{\geq 1/\sqrt{2}} \cdot \underbrace{\|\bm{v}_i - \bm{v}_j\|_2}_{\geq 1/\sqrt{8}} \geq \frac{1}{4}
\]
or otherwise we have $\beta \leq \frac{\pi}{4}$ and
\[
  \|\bm{w}\|_2 = \underbrace{\cos(\beta)}_{\geq 1/\sqrt{2}} \cdot \underbrace{\|\bm{v}_j\|_2}_{\geq 1/\sqrt{8}} \geq \frac{1}{4} 
\]
Either way, $\|\bm{w}\|_2 \geq \frac{1}{4}$.
Since $\bm{v}_i \perp \bm{w}$, the inner products $\left<\bm{g},\bm{v}_i\right>$ and 
$\left<\bm{g},\bm{w}\right>$ are independent random variables and we can estimate
\begin{eqnarray*}
  \Pr[i \in L\textrm{ and }j \in R] &\geq& \Pr[-2 \leq \left<\bm{g},\bm{v}_i\right> \leq -1\textrm{ and }\left<\bm{g},\bm{w}\right> \geq 3] \\
 &=& \Pr\Big[ -\underbrace{\frac{1}{\|\bm{v}_i\|_2}}_{\in [\frac{1}{\sqrt{8}},\frac{1}{\sqrt{2}} ]} \leq \left<\bm{g},\frac{\bm{v}_i}{\|\bm{v}_i\|_2}\right> \leq -\frac{2}{\|\bm{v}_i\|_2}\Big] \cdot \Pr\Big[\left<\bm{g},\frac{\bm{w}}{\|\bm{w}\|_2}\right> \geq \underbrace{\frac{3}{\|\bm{w}\|_2}}_{\leq 12}\Big]  > 0
\end{eqnarray*}
which is some tiny, yet absolute constant. 
Note that in case the latter event happens, then indeed 
\[
\left<\bm{g},\bm{v}_j\right> = \underbrace{\left<\bm{g},\bm{v}_i\right>}_{\geq -2} \cdot \underbrace{\frac{\left<\bm{v}_i,\bm{v}_j\right>}{\|\bm{v}_i\|_2^2}}_{\in [0,1]} + \underbrace{\left<\bm{g},\bm{w}\right>}_{\geq 3} \geq 1.
\] 
\begin{center}
\psset{unit=1.8cm}
\begin{pspicture}(-1,-1.3)(1,1.5)
\pscircle[fillstyle=solid,fillcolor=lightgray](0,0){1.5}
\pscircle[fillstyle=solid,fillcolor=gray](0,0){0.5}
\cnode*(0,0){2.5pt}{origin} \nput[labelsep=2pt]{-90}{origin}{$\bm{0}$}
\SpecialCoor
\cnode*(0,1.2){2.5pt}{vi} \nput[labelsep=2pt]{90}{vi}{$\bm{v}_i$}
\cnode*(1,0.7){2.5pt}{vj} \nput[labelsep=2pt]{120}{vj}{$\bm{v}_j$}
\ncline{->}{origin}{vi}
\ncline{->}{origin}{vj}
\pnode(0,0.7){w}
\ncline[linestyle=dashed]{->}{w}{vj} \naput[labelsep=2pt,npos=0.3]{$\bm{w}$}
\ncline[linestyle=dashed]{vj}{vi}
\psarc(vj){25pt}{180}{-144}
\psarc(vj){25pt}{152}{180}
\nput[labelsep=12pt]{165}{vj}{$\alpha$}
\nput[labelsep=12pt]{-160}{vj}{$\beta$}
\pnode(0.5,0){R1} \ncline{<->}{origin}{R1}\nbput[labelsep=2pt]{$\frac{1}{\sqrt{8}}$}
\pnode(-1.5,0){R2} \ncline{<->}{origin}{R2}\naput[labelsep=2pt,npos=0.6]{$\sqrt{2}$}
\end{pspicture}
\end{center}
\end{proof}
This implies that with a constant probability, the algorithm does not fail in (3). 
The main technical part lies in proving that $\E_{\bm{g} \sim N^m(0,1)}[|M(\bm{g})|] \leq cn$
 where we can make the constant $c$ as small as we want, at the expense of a smaller value of $\Delta$. 
If we choose $c := (\frac{c'}{2})^2$, then $\Pr[|M(\bm{g})| > \frac{c'}{2}] \leq \frac{c'}{2}$
and the success probability of the algorithm is at least $\frac{c'}{2}$.

\section{The proof of the Structure Theorem}

The following geometric theorem by Arora, Rao and Vazirani is the heart of their
$O(\sqrt{\log n})$-approximation for Sparsest Cut. To be precise, the original ARV result~\cite{DBLP:conf/stoc/AroraRV04}
only showed this theorem for $\Delta = \Theta((\log n)^{-2/3})$ and needed a lot of extra work
to get the $O(\sqrt{\log n})$-approximation. The claim as it is stated here was first proven by Lee~\cite{DBLP:conf/soda/Lee05}.
For an edge set $E'$, let $\beta(E')$ be the size of the maximum matching.
\begin{theorem}[\cite{DBLP:conf/stoc/AroraRV04,DBLP:conf/soda/Lee05}] \label{thm:StructureTheoremBoundOnMatchingSize} 
For any constant $c>0$ there is a choice of $\Delta := \Theta_c(1 / \sqrt{\log n})$ so that the following holds: 
Let $\{ \bm{v}_i\}_{i \in V} \subseteq \setR^m$ be a set of $|V| = n$ vectors satisfying the triangle inequality
\[
 \|\bm{v}_i - \bm{v}_j\|_2^2 \leq \|\bm{v}_i - \bm{v}_k\|_2^2 + \|\bm{v}_k - \bm{v}_j\|_2^2 \quad  \forall i,j,k \in V.
\]
For a vector  $\bm{g} \in \setR^m$ define
\[
  E(\bm{g}) := \left\{ (i,j) \in V \times V \mid \left<\bm{v}_j - \bm{v}_i,\bm{g}\right> \geq 2\textrm{ and } \|\bm{v}_i - \bm{v}_j\|_2^2 \leq \Delta \right\}.
\]
Then $\E_{\bm{g} \sim N^m(0,1)}[\beta(E(\bm{g}))] \leq cn$.
\end{theorem}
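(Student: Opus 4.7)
The plan is a chaining argument by contradiction. Suppose $\E_{\bm{g}}[\beta(E(\bm{g}))] > cn$; by Markov's inequality, with probability at least $c/2$ over $\bm{g}$ there is a matching $M \subseteq E(\bm{g})$ of size $\geq cn/2$. Abbreviate $f(i) := \langle \bm{v}_i, \bm{g}\rangle$. Each matching edge $(i,j)$ is a \emph{witness}: $f(j) - f(i) \geq 2$ and $\|\bm{v}_i - \bm{v}_j\|_2^2 \leq \Delta$. The aim is to concatenate $T := \Theta(\sqrt{\log n})$ such witnesses into a chain $x_0, x_1, \ldots, x_{2T}$ of alternating matching edges and ``neighborhood moves'' whose endpoints satisfy $f(x_{2T}) - f(x_0) \geq T$ and $\|\bm{v}_{x_{2T}} - \bm{v}_{x_0}\|_2^2 = O(1)$, the latter obtained by telescoping the $\ell_2^2$-triangle inequality over consecutive pairs of squared distance $O(\Delta)$. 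Then $\langle \bm{v}_{x_{2T}} - \bm{v}_{x_0}, \bm{g}\rangle$ is a Gaussian of variance $O(1)$ exceeding $T = \sqrt{\log n}$, which has probability $\exp(-\Omega(\log n))$; a union bound over the $n$ possible chains (one per starting vertex) then contradicts the assumed matching size.

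The delicate step is extending the chain past the right-endpoint $y = x_{2t+1}$ of a matching edge to a matched left-endpoint $x_{2t+2}$ for the next matching edge. Define the $k$-neighborhood $N_k(y) \subseteq V$ as the $k$ vertices closest to $y$ in the $\|\cdot\|_2$ metric. Since at least $cn/2$ vertices are matched, an averaging argument over $y$ shows that for most choices of $y$, $N_k(y)$ contains $\Omega(ck)$ matched candidates. The per-step squared-distance cost is at most the squared radius $r_k(y)^2 := \max_{j \in N_k(y)} \|\bm{v}_j - \bm{v}_y\|_2^2$, and the per-step backtrack $f(y) - f(x_{2t+2})$ is controlled by the Gaussian expected-maximum bound
\[
\E_{\bm{g}}\Bigl[\max_{j \in N_k(y)} \bigl(f(j) - f(y)\bigr)\Bigr] \;\leq\; r_k(y) \cdot O(\sqrt{\log k}),
\]
together with its symmetric form for the minimum; standard concentration then promotes this to a high-probability statement.

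Over $T$ iterations, the cumulative forward gap is $\geq 2T - O(T \cdot r_k \sqrt{\log n})$ and the cumulative squared distance is $\leq T(\Delta + r_k^2)$. For gap $\geq T$ we need $r_k = O(1/\sqrt{\log n})$; for distance $O(1)$ we need $T(\Delta + r_k^2) = O(1)$, i.e., $\Delta + 1/\log n = O(1/\sqrt{\log n})$. Both conditions are satisfied by $\Delta = \Theta(1/\sqrt{\log n})$ together with $r_k = \Theta(1/\sqrt{\log n})$, with $k$ taken to be a suitable constant fraction of $n$ so that $N_k(y)$ has radius of the required order on average (which is also consistent with the non-cluster hypothesis $|B(i,\frac{1}{8})| \leq \frac{3n}{4}$, preventing the $k$-neighborhood from collapsing to a tiny cluster).

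The main obstacle I anticipate is promoting the in-expectation $k$-neighborhood bounds to uniform ones along a fully deterministic chain: a single ``bad'' vertex along the walk with an anomalously large backtrack could exhaust the entire gap budget. The clean argument advertised in the abstract is presumably to randomize the starting vertex $x_0$ and analyze the expected behavior of the whole chain, so that only expected per-step maxima over $k$-neighborhoods appear in the accounting; the much tighter concentration of a single averaged chain (versus a worst-case walk) is what makes the polynomial union bound survive and yields the announced $\Delta = \Theta(1/\sqrt{\log n})$ rather than the looser $\Delta = \Theta(1/\log n)$ that a naive worst-case chain would give.
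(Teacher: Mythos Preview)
Your plan has the flavor of the original ARV chaining argument, but there is a genuine gap at the step you single out as the ``delicate'' one, and it is not the obstacle you identify.

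The fatal problem is the claim $r_k = O(1/\sqrt{\log n})$ for the radius of the $k$-nearest-neighbor set $N_k(y)$ with $k = \Theta(n)$. Nothing in the hypotheses forces this. The theorem assumes \emph{only} the $\ell_2^2$ triangle inequality; the non-cluster condition $|B(i,\tfrac18)|\le \tfrac{3n}{4}$ you invoke is a hypothesis of the earlier Structure Theorem, not of this one, and in any case it is a \emph{lower} bound on spread (it keeps balls from being too full), never an upper bound on $r_k$. Vectors at pairwise $\ell_2^2$-distance $\Theta(1)$ satisfy all hypotheses and have $r_k = \Theta(1)$ for every $k\ge 2$; for such instances $E(\bm g)$ is empty and the theorem is vacuous, but this shows $r_k$ small cannot be extracted from the assumptions. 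Without it, both your backtrack budget $O(T\, r_k\sqrt{\log n})$ and your distance budget $T(\Delta + r_k^2)$ blow up, and the arithmetic collapses. A secondary issue: your ``union bound over the $n$ chains, one per starting vertex'' cannot be right as stated, since the matching edges and the neighborhood moves both depend on $\bm g$; what you would actually union-bound over is all $n^2$ endpoint pairs, which is harmless once constants are set, but only after the radius issue is fixed.

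The paper's ``$k$-neighborhoods'' are not $k$-nearest neighbors but $k$-\emph{hop} neighborhoods $\Gamma_k(i)$ in the graph of $\Delta$-short edges; their Euclidean radius is at most $\sqrt{k\Delta}$ \emph{automatically} from the triangle inequality, with no extra geometric assumption. Rather than build a chain for a fixed $\bm g$, the paper tracks the deterministic quantity $\mu_{i,k}:=\E_{\bm g}\bigl[\max_{j\in\Gamma_k(i)}\langle \bm g,\bm v_i-\bm v_j\rangle\bigr]$ and proves an ``extend or expand'' dichotomy: on any set $U$ with $\mu_{\cdot,k}\ge\delta$, either a $\Theta_c(1)$-fraction of $\Gamma(U)$ has $\mu_{\cdot,k+1}\ge\delta+1$, or $|\Gamma(U)|\ge (4/c)\,|U|$ (and $\mu_{\cdot,k+1}\ge\delta$ there). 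The expansion alternative cannot occur more often than the extension one, so after $k=\Theta_c(1/\Delta)$ steps some node has $\mu_{i,2k}\ge k$, while the $\sqrt{k\Delta}$-Lipschitz bound on $F_{i,k}$ gives $\mu_{i,2k}\le O(\sqrt{k\Delta\log n})$; comparing forces $\Delta=\Omega_c(1/\sqrt{\log n})$. The matching is used only inside the extension step, to certify that $F_{j,k+1}(\bm g)\ge\delta+\tfrac32$ with probability $\ge c^2/16$ for many $j$, which Lipschitz concentration then converts into $\mu_{j,k+1}\ge\delta+1$. There is no explicit chain, no per-step randomization of a walk, and no union bound over paths --- the passage through expectations $\mu_{i,k}$ is exactly what sidesteps the ``bad vertex along the walk'' obstacle you anticipated.
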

Here we think of $E(\bm{g})$ as directed edges. 
Let $M(\bm{g})$ be a maximum matching attaining $\beta(E(\bm{g}))$.
We will assume the existence of such a matching $M(\bm{g})$ and lead this to a contradiction. 
By inducing on a subgraph and reducing the constant $c$ one can even assume that for \emph{every} node
the probability of having an outgoing edge is at least $c$ and the same is true for ingoing edges. 
First, there is no harm in assuming that $M(\bm{g})$ has the reverse edges of $M(-\bm{g})$,
which implies that each node has an outgoing edge with the same probability as it has an incoming edge. 
\begin{lemma}
Assume that Theorem~\ref{thm:StructureTheoremBoundOnMatchingSize} is false for vectors $\{ \bm{v}_i\}_{i\in V} \subseteq \setR^m$. 
Then there is a subset $V' \subseteq V$ of size $|V'| \geq cn$ and a matching $M'(\bm{g}) \subseteq M(\bm{g})$ on $V'$ so that 
every node $i \in V'$ has an outgoing edge in $M'(\bm{g})$ with probability at least $\frac{c}{8}$ and an ingoing edge with probability at least $\frac{c}{8}$. 
\end{lemma}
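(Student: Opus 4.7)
The plan has two steps: a symmetrization that makes the in- and out-matched probabilities equal for every node, followed by an iterative peeling that removes nodes with low matched-probability.

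For the symmetrization, I would first observe that $E(-\bm{g})$ is simply $E(\bm{g})$ with every edge reversed, since $\langle \bm{v}_j - \bm{v}_i, -\bm{g}\rangle = -\langle \bm{v}_j - \bm{v}_i, \bm{g}\rangle$ and the distance condition is symmetric in $i,j$. Consequently the two edge sets admit maximum matchings of the same size, and we may choose them consistently so that $M(-\bm{g}) = \{(j,i) : (i,j) \in M(\bm{g})\}$ for every $\bm{g}$. Because $N^m(0,1)$ is invariant under $\bm{g}\mapsto -\bm{g}$, this convention forces $p_i^{\mathrm{out}} := \Pr_{\bm{g}}[i\text{ has an out-edge in }M(\bm{g})]$ to equal $p_i^{\mathrm{in}}$ for every $i$. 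The same identity survives restriction: if $M'(\bm{g})$ denotes the matching obtained from $M(\bm{g})$ by keeping only edges inside $V'\times V'$, then restricting to $V'\times V'$ commutes with edge-reversal, so the in/out probabilities $q_i^{\mathrm{out}}, q_i^{\mathrm{in}}$ of $M'(\bm{g})$ also agree for every $i \in V'$.

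For the peeling, initialize $V' := V$ and $M'(\bm{g}) := M(\bm{g})$. As long as some $i \in V'$ has $q_i^{\mathrm{out}} < c/8$, remove $i$ from $V'$ and delete all edges of $M'$ incident to $i$. By the symmetrization, $q_i^{\mathrm{in}} = q_i^{\mathrm{out}} < c/8$ as well, so each removal decreases $\E_{\bm{g}}[|M'(\bm{g})|] = \sum_{j\in V'} q_j^{\mathrm{out}}$ by at most $q_i^{\mathrm{out}} + q_i^{\mathrm{in}} < c/4$. At termination every $i \in V'$ satisfies $q_i^{\mathrm{out}}, q_i^{\mathrm{in}} \geq c/8$, which is the conclusion of the lemma.

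It only remains to lower-bound $|V'|$. Initially, $\sum_{i\in V} p_i^{\mathrm{out}} = \E_{\bm{g}}[|M(\bm{g})|] \geq cn$ by the hypothesis that Theorem~\ref{thm:StructureTheoremBoundOnMatchingSize} fails, so after $t$ removals $\E_{\bm{g}}[|M'(\bm{g})|] \geq cn - tc/4$. Combining this with the trivial upper bound $\E_{\bm{g}}[|M'(\bm{g})|] \leq |V'| = n-t$ yields $|V'| \geq 3cn/(4-c) = \Omega(cn)$, which after absorbing the constant into $c$ (i.e.\ starting from a slightly larger negation constant) gives $|V'| \geq cn$. The only delicate point is the symmetrization: one must verify that the equality $q_i^{\mathrm{out}} = q_i^{\mathrm{in}}$ is preserved throughout the peeling, but as noted this is automatic because restriction and reversal commute, so the restricted matching inherits the reversal symmetry from $M$.
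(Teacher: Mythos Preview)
Your argument is correct and follows the same two-step scheme as the paper: first enforce the reversal symmetry $M(-\bm{g})=\{(j,i):(i,j)\in M(\bm{g})\}$ so that in- and out-probabilities coincide (and remain equal under restriction), then iteratively peel off nodes with out-probability below $c/8$, charging at most $c/4$ to the expected matching size per removal. The only difference is in the final counting: the paper uses the matching bound $|M'(\bm{g})|\leq |V'|/2$ rather than your $|M'(\bm{g})|\leq |V'|$, which yields $|V'|\geq cn$ directly from $\E[|M'|]\geq cn - n\cdot c/4 \geq cn/2$ without needing to absorb a constant.
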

\begin{proof}
For each node $i \in V$ define $p(i) := \Pr_{\bm{g} \sim N^m(0,1)}[i\textrm{ has outgoing edge in }M(\bm{g})]$. 
If there is a node $i$ with $p(i) \leq \frac{c}{8}$, then we imagine to delete the node from the graph and remove
from $M(\bm{g})$ any edge containing node $i$. Note that this decreases the expected size of the matching 
by at most $2 \cdot \frac{c}{8}$. 
We continue this procedure until no such node exists anymore. Let $V'$ be the remaining set of nodes
with $M'(\bm{g}) := M(\bm{g}) \cap (V' \times V')$. Then $\E_{\bm{g} \sim N^m(0,1)}[|M'(\bm{g})|] \geq \E_{\bm{g} \sim N^m(0,1)}[|M(\bm{g})|] - n \cdot \frac{c}{4} \geq \frac{c}{2}n$. 
Then there must be at least $|V'| \geq cn$ many nodes left.  
\end{proof}
After changing the constants and adapting the value of $n$, we assume to have $n$ nodes and 
for every node $i \in V$, the matching $M(\bm{g})$ has an outgoing and an incoming edge with 
probability at least $c$. 

We call an edge $(i,j)$ \emph{$\Delta$-short} if $d(i,j) \leq \Delta$. 
For a node $i \in V$, let $\Gamma(i) := \{ j \in V \mid d(i,j) \leq \Delta\}$ be the \emph{neighborhood} of $i$
with respect to the graph of $\Delta$-short edges. Moreover, let $\Gamma_k(i) := \Gamma_{k-1}(\Gamma(i))$
be the nodes that can be reached from $i$ via at most $k$ many $\Delta$-short edges. 
\begin{lemma} \label{lem:DistanceOfkNeighborhood}
For any $k \in \setZ_{\geq 0}$ and $i' \in \Gamma_k(i)$ one has $\|\bm{v}_i - \bm{v}_{i'}\|_2 \leq \sqrt{k\Delta}$.
\end{lemma}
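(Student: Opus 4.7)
The key observation is that the squared Euclidean distance $d(i,j) = \|\bm{v}_i - \bm{v}_j\|_2^2$ satisfies the triangle inequality in this setting, thanks to the SDP constraint (IV). So my plan is to run a short induction on $k$ using $d$ as the ``linear'' quantity, and only take the square root at the end.

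Concretely, the base case $k=0$ is immediate since then $i' = i$ and $\|\bm{v}_i - \bm{v}_{i'}\|_2 = 0 \leq \sqrt{0 \cdot \Delta}$. For the inductive step, by the definition $\Gamma_k(i) = \Gamma_{k-1}(\Gamma(i))$, any $i' \in \Gamma_k(i)$ can be reached via some intermediate node $j \in \Gamma_{k-1}(i)$ with $i' \in \Gamma(j)$, i.e.\ with $d(j,i') \leq \Delta$. The inductive hypothesis, applied to the squared version of the statement, gives $d(i,j) \leq (k-1)\Delta$. Then constraint (IV) yields
\[
d(i,i') \;\leq\; d(i,j) + d(j,i') \;\leq\; (k-1)\Delta + \Delta \;=\; k\Delta,
\]
and taking square roots gives the claim.

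I don't expect any real obstacle here: the only subtlety is to remember to track the \emph{squared} distance through the recursion (so that the triangle inequality adds linearly), rather than the Euclidean distance itself, which would only yield the weaker bound $k\sqrt{\Delta}$. Equivalently, one can phrase the proof path-theoretically: unroll a path $i = i_0, i_1, \ldots, i_k = i'$ with $d(i_{\ell-1}, i_\ell) \leq \Delta$ for each $\ell$, telescope (IV) along the path to get $\|\bm{v}_i - \bm{v}_{i'}\|_2^2 \leq k\Delta$, and conclude.
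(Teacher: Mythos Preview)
Your proof is correct and is essentially the same argument as the paper's: both use the SDP triangle inequality to obtain $d(i,i') \leq k\Delta$ along a path of at most $k$ short edges, then take square roots. The paper compresses your induction into one line, and your ``path-theoretic'' alternative at the end is exactly the paper's phrasing.
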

\begin{proof}
We have $\|\bm{v}_i - \bm{v}_{i'}\|_2^2 = d(i,i') \leq k \cdot \Delta$ by the SDP triangle inequality. Taking square roots
gives the claim.
\end{proof}
At the heart of the arguments lies the fact that the value of Lipschitz functions is well concentrated.
Recall that a function $F : \setR^m \to \setR$ is called \emph{$L$-Lipschitz}, if
$|F(\bm{x}) - F(\bm{y})| \leq L \cdot \|\bm{x} - \bm{y}\|_2$ for all $\bm{x},\bm{y} \in \setR^m$. 
\begin{lemma}[Concentration for Lipschitz Functions (Sudakov-Tsirelson, Borrell)] \label{lem:ConcentrationForLipschitzFunctions}
Let $F : \setR^m \to \setR$ be an $L$-Lipschitz function with 
Gaussian mean $\mu := \E_{\bm{g} \sim N^m(0,1)}[F(\bm{g})]$. Then $\Pr_{\bm{g} \sim N^m(0,1)}[|F(\bm{g}) - \mu| \geq \alpha] \leq 2e^{-\frac{1}{2} \cdot (\alpha/L)^2}$ for any $\alpha \geq 0$.
\end{lemma}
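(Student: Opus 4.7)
The plan is to control the moment generating function of $F(\bm{g}) - \mu$ by a Gaussian interpolation trick and then apply Markov's inequality. First I would smooth $F$ by convolving with a narrow Gaussian kernel so that $F$ becomes $C^\infty$ while remaining $L$-Lipschitz and with $\|\nabla F(\bm{x})\|_2 \leq L$ everywhere; at the end the smoothing parameter is sent to zero. It suffices to prove the one-sided bound $\Pr[F(\bm{g}) - \mu \geq \alpha] \leq e^{-\alpha^2/(2L^2)}$, since the two-sided version follows by applying the same bound to $-F$.

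The core step is a symmetrization. Let $\bm{g}' \sim N^m(0,1)$ be an independent copy and rotate into $\bm{g}(\theta) := \bm{g} \sin \theta + \bm{g}' \cos \theta$ and $\bm{g}^*(\theta) := \bm{g} \cos \theta - \bm{g}' \sin \theta$. For every fixed $\theta \in [0,\pi/2]$ the pair $(\bm{g}(\theta),\bm{g}^*(\theta))$ is a pair of independent standard Gaussians, with $\bm{g}(0) = \bm{g}'$ and $\bm{g}(\pi/2) = \bm{g}$. Then
\[
F(\bm{g}) - F(\bm{g}') = \int_0^{\pi/2} \bigl\langle \nabla F(\bm{g}(\theta)), \bm{g}^*(\theta)\bigr\rangle \, d\theta.
\]
Applying Jensen's inequality to $e^{\lambda(\cdot)}$ with the uniform measure on $[0,\pi/2]$, then taking expectations, and using that conditional on $\bm{g}(\theta)$ the quantity $\langle \nabla F(\bm{g}(\theta)), \bm{g}^*(\theta)\rangle$ is centered Gaussian with variance $\|\nabla F(\bm{g}(\theta))\|_2^2 \leq L^2$, I obtain
\[
\E\bigl[e^{\lambda(F(\bm{g}) - F(\bm{g}'))}\bigr] \leq \frac{2}{\pi}\int_0^{\pi/2} e^{\lambda^2 (\pi/2)^2 L^2 / 2}\, d\theta = e^{\lambda^2 \pi^2 L^2 / 8}.
\]
Because $\mu = \E[F(\bm{g}')]$, Jensen's inequality in $\bm{g}'$ upgrades this to $\E[e^{\lambda(F(\bm{g})-\mu)}] \leq e^{\lambda^2 \pi^2 L^2/8}$. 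Finally, Markov's inequality and optimization in $\lambda > 0$ yield a bound of the form $e^{-c \alpha^2/L^2}$, giving the sub-Gaussian tail.

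The main obstacle is the constant. The rotation argument above loses a factor of $\pi^2/4$ and produces $e^{-2\alpha^2/(\pi^2 L^2)}$ rather than the tight $e^{-\alpha^2/(2L^2)}$ quoted by Sudakov--Tsirelson and Borell. To recover the sharp constant I would invoke the Gaussian log-Sobolev inequality $\mathrm{Ent}(f^2) \leq 2\,\E[\|\nabla f\|_2^2]$ and apply Herbst's argument: setting $H(\lambda) := \E[e^{\lambda F(\bm{g})}]$ and plugging $f = e^{\lambda F/2}$ into log-Sobolev gives a differential inequality $\lambda H'(\lambda) - H(\lambda) \log H(\lambda) \leq \tfrac{\lambda^2 L^2}{2} H(\lambda)$, which integrates to $\E[e^{\lambda(F(\bm{g})-\mu)}] \leq e^{\lambda^2 L^2/2}$, and Markov then yields the stated bound with constant $\tfrac{1}{2}$. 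For the purposes of the Structure Theorem, however, the weaker constant produced by the rotation argument above is already sufficient, since it only affects the hidden constant inside $\Delta = \Theta(1/\sqrt{\log n})$.
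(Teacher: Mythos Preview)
The paper does not prove this lemma at all; it is simply quoted as a classical result of Sudakov--Tsirelson and Borell and then used as a black box in the subsequent arguments. So there is no ``paper's own proof'' to compare against.

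Your proposal is correct and in fact supplies \emph{two} standard proofs from the literature. The rotation/interpolation argument you give first is exactly Pisier's proof (sometimes attributed to Maurey--Pisier): the identity $F(\bm{g})-F(\bm{g}')=\int_0^{\pi/2}\langle\nabla F(\bm{g}(\theta)),\bm{g}^*(\theta)\rangle\,d\theta$, Jensen on the $\theta$-average, and the Gaussian MGF bound are all handled correctly, and you are right that this yields the exponent $-2\alpha^2/(\pi^2 L^2)$ rather than the sharp $-\alpha^2/(2L^2)$. Your second route---Gaussian log-Sobolev plus Herbst's differential-inequality argument for $H(\lambda)=\E[e^{\lambda F(\bm{g})}]$---is precisely the textbook way to recover the optimal constant $\tfrac{1}{2}$ around the mean, and is correctly sketched. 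Your closing remark that the suboptimal constant would be immaterial for the Structure Theorem is also accurate: every use of Lemma~\ref{lem:ConcentrationForLipschitzFunctions} in the paper (in Lemmas~\ref{lem:UpperBoundOnMuKI} and~\ref{lem:ExpansionOrExtensionLemma}) only needs a bound of the form $2e^{-c(\alpha/L)^2}$ for some absolute $c>0$, which is then absorbed into the hidden constants in $\Delta=\Theta(1/\sqrt{\log n})$.
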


We define a function
\[
  F_{i,k}(\bm{g}) := \max\{ \left<\bm{g},\bm{v}_i - \bm{v}_j\right> \mid j \in \Gamma_k(i) \}.
\]
In other words, $F_{i,k}(\bm{g})$ gives the maximum inner product $\left<\bm{g},\bm{v}_i-\bm{v}_j\right>$ over all nodes $j \in V$ that are 
within $k$ many $\Delta$-short edges of node $i$. 
Note that $F_{i,k}(\bm{g}) \geq \left<\bm{g},\bm{v}_i-\bm{v}_i\right> = 0$ for all $\bm{g} \in \setR^m$ as $i \in \Gamma_k(i)$.
\begin{lemma} \label{lem:LipschitzPropertyOfFik}
The function $F_{i,k} : \setR^m \to \setR$ is $\sqrt{k \cdot \Delta}$-Lipschitz. 
\end{lemma}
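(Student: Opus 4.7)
The plan is to use the standard fact that the maximum of a finite family of linear functions inherits the maximum Lipschitz constant in the family, and to combine this with Lemma~\ref{lem:DistanceOfkNeighborhood}.

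More concretely, fix $\bm{g}, \bm{g}' \in \setR^m$ and let $j^* \in \Gamma_k(i)$ be an index attaining the maximum in $F_{i,k}(\bm{g})$. Since $F_{i,k}(\bm{g}') \geq \left<\bm{g}',\bm{v}_i - \bm{v}_{j^*}\right>$ by definition, we get
\[
  F_{i,k}(\bm{g}) - F_{i,k}(\bm{g}') \;\leq\; \left<\bm{g} - \bm{g}', \bm{v}_i - \bm{v}_{j^*}\right> \;\leq\; \|\bm{g} - \bm{g}'\|_2 \cdot \|\bm{v}_i - \bm{v}_{j^*}\|_2,
\]
using Cauchy-Schwarz. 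Now Lemma~\ref{lem:DistanceOfkNeighborhood} bounds $\|\bm{v}_i - \bm{v}_{j^*}\|_2 \leq \sqrt{k \Delta}$, giving one direction of the Lipschitz inequality.

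Swapping the roles of $\bm{g}$ and $\bm{g}'$ (and choosing a maximizer for $F_{i,k}(\bm{g}')$) gives the reverse bound, so $|F_{i,k}(\bm{g}) - F_{i,k}(\bm{g}')| \leq \sqrt{k\Delta} \cdot \|\bm{g} - \bm{g}'\|_2$. There is really no obstacle here; the only thing to notice is that one cannot pick a single maximizer and differentiate, since the max is not smooth, so the right move is the one-sided argument above applied twice. The key structural input is Lemma~\ref{lem:DistanceOfkNeighborhood}, which converts the SDP triangle inequality into the Euclidean diameter bound $\sqrt{k\Delta}$ on $\Gamma_k(i)$.
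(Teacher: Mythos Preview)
Your proof is correct and matches the paper's argument essentially line for line: pick a maximizer $j^*$ for $F_{i,k}(\bm{g})$, use $F_{i,k}(\bm{g}') \geq \left<\bm{g}',\bm{v}_i-\bm{v}_{j^*}\right>$ together with Cauchy--Schwarz, and invoke Lemma~\ref{lem:DistanceOfkNeighborhood} for the bound $\|\bm{v}_i-\bm{v}_{j^*}\|_2 \leq \sqrt{k\Delta}$. The only cosmetic difference is that the paper assumes $F_{i,k}(\bm{g}) \geq F_{i,k}(\bm{g}')$ without loss of generality instead of explicitly swapping the roles of $\bm{g}$ and $\bm{g}'$.
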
 
\begin{proof}
Fix $\bm{g},\bm{g}' \in \setR^m$ and assume for the sake of symmetry that $F(\bm{g}) \geq F(\bm{g}')$.
Let $j,j' \in \Gamma_k(i)$ be the nodes attaining $F(\bm{g})$ and $F(\bm{g}')$, resp. 
Then
\[
  |F(\bm{g}) - F(\bm{g}')| = \left<\bm{g},\bm{v}_{i}-\bm{v}_{j}\right> - \left<\bm{g}',\bm{v}_{i} - \bm{v}_{j'}\right> 
 \leq \left<\bm{g}-\bm{g}',\bm{v}_{i}-\bm{v}_j\right>  
\stackrel{\textrm{Cauchy-Schwarz}}{\leq} \|\bm{g} - \bm{g}'\|_2 \cdot \underbrace{\|\bm{v}_{i} - \bm{v}_j\|_2}_{\leq \sqrt{k \cdot \Delta}\textrm{ by Lem.~\ref{lem:DistanceOfkNeighborhood}}}. 
\]
Here we used in the first inequality that $j' \in \Gamma_k(i)$ maximizes $\left<\bm{g}',\bm{v}_{i}-\bm{v}_{j'}\right>$.
\end{proof}

Now, let $\mu_{i,k} := \E_{\bm{g} \sim N^m(0,1)}[F_{i,k}(\bm{g})]$ be the expected maximum inner product over $k$-neighbors
of node $i \in V$.
One useful argument will be a nice relation between expectations of neighbors: 
\begin{lemma} \label{lem:BehauviorOfMuKIForNeighbors}
For any node $i \in V$ and $i' \in \Gamma(i)$ and $k \in \setZ_{\geq 0}$ one has $\mu_{i',k+1} \geq \mu_{i,k}$.
\end{lemma}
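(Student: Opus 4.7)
The plan is to exploit two observations: that the $k$-neighborhood of $i$ sits inside the $(k+1)$-neighborhood of $i'$, and that any Gaussian linear functional has mean zero.

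First, I would argue the set inclusion $\Gamma_k(i) \subseteq \Gamma_{k+1}(i')$. Since $i' \in \Gamma(i)$ means $d(i,i') \leq \Delta$, symmetry of $d$ gives $i \in \Gamma(i')$, so $\Gamma_k(i) \subseteq \Gamma_k(\Gamma(i')) = \Gamma_{k+1}(i')$. Intuitively, any node reachable from $i$ in at most $k$ $\Delta$-short hops is reachable from $i'$ in one extra hop via $i$.

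Next, for any fixed $\bm{g} \in \setR^m$ I would split the inner product via the identity $\bm{v}_{i'} - \bm{v}_{j'} = (\bm{v}_{i'} - \bm{v}_i) + (\bm{v}_i - \bm{v}_{j'})$. This lets me write
\[
  F_{i',k+1}(\bm{g}) = \left<\bm{g},\bm{v}_{i'}-\bm{v}_i\right> + \max_{j' \in \Gamma_{k+1}(i')} \left<\bm{g},\bm{v}_i - \bm{v}_{j'}\right>.
\]
By the inclusion from the previous paragraph, restricting the maximum to the smaller set $\Gamma_k(i)$ can only decrease it, giving
\[
  F_{i',k+1}(\bm{g}) \geq \left<\bm{g},\bm{v}_{i'}-\bm{v}_i\right> + F_{i,k}(\bm{g}).
\]

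Finally, I would take expectations under $\bm{g} \sim N^m(0,1)$. The linear term $\left<\bm{g},\bm{v}_{i'}-\bm{v}_i\right>$ is a centered Gaussian and has expectation zero, so $\mu_{i',k+1} \geq \mu_{i,k}$, as desired. There is no real obstacle here; the only subtle point is matching the centers of the two max operators, which is handled cleanly by the decomposition above.
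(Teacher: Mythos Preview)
Your proof is correct and essentially identical to the paper's: both use the inclusion $\Gamma_k(i) \subseteq \Gamma_{k+1}(i')$, the decomposition $\bm{v}_{i'}-\bm{v}_{j} = (\bm{v}_{i'}-\bm{v}_i)+(\bm{v}_i-\bm{v}_j)$ to pull the constant term out of the max, and the fact that $\E_{\bm{g}}[\left<\bm{g},\bm{v}_{i'}-\bm{v}_i\right>]=0$. If anything, you are slightly more explicit in justifying the set inclusion than the paper is.
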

\begin{proof}
We have
\begin{eqnarray*}
 \mu_{i',k+1} &=& \E_{\bm{g} \sim N^m(0,1)}\Big[\max \Big\{ \left<\bm{v}_{i'} - \bm{v}_{i},\bm{g}\right> + \left<\bm{v}_i - \bm{v}_{j},\bm{g}\right>\mid j \in \Gamma_{k+1}(i') \Big\}\Big] \\
 &\stackrel{\Gamma_{k+1}(i') \supseteq \Gamma_k(i)}{\geq}& \E_{\bm{g} \sim N^m(0,1)}\left[\max\left\{ \left<\bm{v}_i-\bm{v}_{j},\bm{g}\right> \mid j \in \Gamma_k(i) \right\} \right] + \underbrace{\E_{\bm{g} \sim N^m(0,1)}[\left<\bm{v}_{i'} - \bm{v}_{i},\bm{g}\right>]}_{=0} = \mu_{i,k}.
\end{eqnarray*}
\end{proof}
On the other hand, we can get the following upper bound: 
\begin{lemma} \label{lem:UpperBoundOnMuKI}
For any $k \in \setZ_{\geq 0}$ and any $i \in V$ one has $\mu_{i,k} \leq 10\sqrt{\log n} \cdot \sqrt{k \Delta}$. 
\end{lemma}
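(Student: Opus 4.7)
The quantity $\mu_{i,k}$ is the expected maximum of the family of (correlated) centered Gaussian random variables $X_j := \left<\bm{g},\bm{v}_i - \bm{v}_j\right>$ indexed by $j \in \Gamma_k(i)$. There are at most $n$ of them, and by Lemma~\ref{lem:DistanceOfkNeighborhood} each $X_j$ has variance $\|\bm{v}_i - \bm{v}_j\|_2^2 \leq k\Delta$. So the claim is a special case of the standard fact that the expectation of the maximum of $n$ centered Gaussians of variance at most $\sigma^2$ is $O(\sigma\sqrt{\log n})$, with very loose constants.

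My approach is the classical layer-cake / union-bound argument. Since $F_{i,k} \geq 0$, I write
\[
  \mu_{i,k} = \int_0^\infty \Pr_{\bm{g}\sim N^m(0,1)}[F_{i,k}(\bm{g}) > t]\, dt.
\]
Set the threshold $t_0 := \sqrt{2 k \Delta \log n}$. For $t \leq t_0$ I bound the integrand trivially by $1$, contributing at most $t_0 = \sqrt{2k\Delta \log n}$. For $t > t_0$ I use a union bound together with the one-dimensional Gaussian tail: for each $j \in \Gamma_k(i)$,
\[
  \Pr[X_j > t] \;\leq\; \Pr_{h \sim N(0,1)}[h > t/\sqrt{k\Delta}] \;\leq\; e^{-t^2/(2k\Delta)},
\]
so $\Pr[F_{i,k}(\bm{g}) > t] \leq n \cdot e^{-t^2/(2k\Delta)}$. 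The choice of $t_0$ is calibrated so that this union bound equals $1$ at $t=t_0$.

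The tail contribution is then
\[
  n \int_{t_0}^\infty e^{-t^2/(2k\Delta)}\, dt \;=\; n\sqrt{k\Delta} \int_{\sqrt{2\log n}}^\infty e^{-u^2/2}\, du \;\leq\; n\sqrt{k\Delta} \cdot \frac{1}{\sqrt{2\log n}} \cdot e^{-\log n} \;=\; \frac{\sqrt{k\Delta}}{\sqrt{2\log n}},
\]
using the Mills-ratio estimate $\int_a^\infty e^{-u^2/2}\,du \leq \frac{1}{a}e^{-a^2/2}$ for $a > 0$. Combining the two pieces yields $\mu_{i,k} \leq \sqrt{2k\Delta \log n} + \sqrt{k\Delta/(2\log n)}$, which for $n \geq 2$ is comfortably below $10\sqrt{\log n}\cdot \sqrt{k\Delta}$.

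\textbf{Main obstacle.} There is no serious obstacle; the argument is textbook. The only content is picking the right threshold $t_0$ to make the union bound bite exactly when it saturates, and observing that the individual $X_j$ are genuinely subgaussian with parameter $\sqrt{k\Delta}$ rather than some larger quantity, which is where Lemma~\ref{lem:DistanceOfkNeighborhood} is invoked. The generous factor of $10$ absorbs all constants.
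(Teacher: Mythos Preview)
Your proof is correct. It is the standard layer-cake argument for the expected maximum of $n$ centered Gaussians with variance at most $k\Delta$, and the constants work out comfortably (one should perhaps note that the case $k=0$ is trivial since then $F_{i,0}\equiv 0$, as the substitution $u=t/\sqrt{k\Delta}$ degenerates there).

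The paper takes a slightly different route. It uses the same union bound over $j\in\Gamma_k(i)$ and the same Gaussian tail estimate, but only to conclude that $\Pr[F_{i,k}(\bm{g}) \leq 8\sqrt{\log n}\cdot\sqrt{k\Delta}] \geq 1/2$, i.e.\ a bound on the \emph{median}. It then converts this into a bound on the \emph{mean} $\mu_{i,k}$ by invoking Lemma~\ref{lem:ConcentrationForLipschitzFunctions}: since $F_{i,k}$ is $\sqrt{k\Delta}$-Lipschitz, the mean and median differ by at most $2\sqrt{k\Delta}$. Your argument is more elementary in that it never touches the Lipschitz concentration lemma and works directly with the tail integral; the paper's argument is a bit more thematic, since it reuses the Lipschitz concentration machinery that drives the rest of Section~6. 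Both are short and standard.
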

\begin{proof}
For any $j \in \Gamma_k(i)$ we have $\|\bm{v}_i - \bm{v}_j\|_2 \leq \sqrt{k\Delta}$ and generously
\[
\Pr_{\bm{g} \sim N^m(0,1)}\Big[\left<\bm{v}_i-\bm{v}_j,\bm{g}\right> \geq 8\sqrt{\log n} \cdot \sqrt{k\Delta}\Big] \leq \int_{8\sqrt{\log n}}^{\infty} \frac{1}{\sqrt{2\pi}}e^{-x^2/2} dx \leq \frac{1}{2n}.
\] 
That means $\Pr_{\bm{g} \sim N^m(0,1)}[F_{i,k}(\bm{g}) \leq 8\sqrt{\log n} \cdot \sqrt{k\Delta}] \geq 1/2$.  Again, $F_{i,k}$ is $\sqrt{k\Delta}$-Lipschitz,
hence $\Pr_{\bm{g} \sim N^m(0,1)}[|F_{i,k}(\bm{g}) - \mu_{i,k}| \geq 2\sqrt{k\Delta}] \leq \frac{1}{2}$ and $2 \leq 2\sqrt{\log(n)}$ for $n \geq 2$.
\end{proof}

\subsection{Extend or expand}

The core argument to get to a contradiction is the following: 
\begin{lemma} \label{lem:ExpansionOrExtensionLemma}
Let $\Delta>0$, $\delta \in \setR$ and $k \in \{ 0,\ldots,\frac{1}{100} \log(1/c) \cdot \frac{1}{\Delta} \}$ be parameters
and $U \subseteq \{ i \in V \mid \mu_{i,k} \geq \delta \}$ be a set of nodes. Then 
\begin{enumerate}
\item[(A)] either there is a subset $U' \subseteq \Gamma(U)$ so that $|U'| \geq \frac{c}{4} \cdot |U|$ and $\mu_{i,k+1} \geq \delta+1$ for all $i \in U'$. 
\item[(B)] or the neighborhood $U' := \Gamma(U)$ satisfies $|U'| \geq \frac{4}{c} \cdot |U|$ and
$\mu_{i,k+1} \geq \delta$ for all $i \in U'$. 
\end{enumerate}
\end{lemma}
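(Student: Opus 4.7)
The plan is a dichotomy on how much the one-step neighborhood $\Gamma(U)$ expands. If $|\Gamma(U)| \geq \frac{4}{c}|U|$, I simply take $U' := \Gamma(U)$ to realize case (B): every $i \in U'$ has some $i' \in U$ with $i \in \Gamma(i')$, so Lemma~\ref{lem:BehauviorOfMuKIForNeighbors} (applied with the roles of the two nodes swapped) gives $\mu_{i,k+1} \geq \mu_{i',k} \geq \delta$. Henceforth I assume $|\Gamma(U)| < \frac{4}{c}|U|$, and the remaining work is to produce case (A).

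The first ingredient is a pointwise ``matching boost'': for any $\bm{g}$ and any $(i,j) \in M(\bm{g})$ with $i \in U$, the definition of $E(\bm{g})$ gives $j \in \Gamma(i) \subseteq \Gamma(U)$ (so $\Gamma_k(i) \subseteq \Gamma_{k+1}(j)$) and $\left<\bm{g},\bm{v}_j-\bm{v}_i\right> \geq 2$. Picking $j^* \in \Gamma_k(i)$ attaining $F_{i,k}(\bm{g})$ then yields
\[
F_{j,k+1}(\bm{g}) \geq \left<\bm{g},\bm{v}_j-\bm{v}_{j^*}\right> = \left<\bm{g},\bm{v}_j-\bm{v}_i\right> + F_{i,k}(\bm{g}) \geq F_{i,k}(\bm{g}) + 2.
\]
The second ingredient is Lipschitz concentration: applying Lemma~\ref{lem:ConcentrationForLipschitzFunctions} to the $\sqrt{k\Delta}$-Lipschitz function $F_{i,k}$ together with the parameter bound $k\Delta \leq \log(1/c)/100$ gives $\Pr[F_{i,k}(\bm{g}) < \mu_{i,k} - \tfrac{1}{2}] \leq c/10$ for $c$ a sufficiently small absolute constant. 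Intersecting with the outgoing-edge event of probability $\geq c$, the event
\[
B_i := \{F_{i,k}(\bm{g}) \geq \mu_{i,k} - \tfrac{1}{2}\} \cap \{i \text{ has outgoing edge in } M(\bm{g})\}
\]
has $\Pr[B_i] \geq 9c/10$. On $B_i$ the pointwise boost forces $F_{j(i,\bm{g}),k+1}(\bm{g}) \geq \mu_{i,k} + \tfrac{3}{2} \geq \delta + \tfrac{3}{2}$.

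Now define $U' := \{j \in \Gamma(U) : \mu_{j,k+1} \geq \delta + 1\}$. For $j \in \Gamma(U) \setminus U'$, a second application of concentration to the $\sqrt{(k+1)\Delta}$-Lipschitz function $F_{j,k+1}$ yields $\Pr[F_{j,k+1}(\bm{g}) \geq \delta + \tfrac{3}{2}] \leq \Pr[F_{j,k+1}(\bm{g}) - \mu_{j,k+1} > \tfrac{1}{2}] \leq c^2/10$. Summing $\Pr[B_i]$ over $i \in U$ and splitting by whether $j(i,\bm{g}) \in U'$ gives
\[
\tfrac{9c}{10}|U| \;\leq\; \sum_{i \in U} \Pr[B_i] \;\leq\; |U'| + \sum_{j \in \Gamma(U) \setminus U'} \Pr[F_{j,k+1}(\bm{g}) \geq \delta + \tfrac{3}{2}] \;\leq\; |U'| + \tfrac{4}{c}|U| \cdot \tfrac{c^2}{10},
\]
where the contribution $\leq |U'|$ from $U'$ uses that for each $\bm{g}$ at most one $i$ is matched to any fixed $j$. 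Rearranging gives $|U'| \geq \tfrac{c}{2}|U| \geq \tfrac{c}{4}|U|$, which is (A). The main obstacle is the quantitative calibration of the two concentration margins (each $\tfrac{1}{2}$, summing to the full matching boost of $2$) so that both failure probabilities drop below $c/10$ and $c^2/10$ respectively; the constraint $k\Delta \leq \log(1/c)/100$ is tuned precisely so that $\sqrt{k\Delta}$ and $\sqrt{(k+1)\Delta}$ are small enough to make both Gaussian tail bounds go through, provided $c$ is chosen as a sufficiently small absolute constant.
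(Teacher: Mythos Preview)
Your proof is correct and follows essentially the same route as the paper: the identical case split on $|\Gamma(U)|$, the same pointwise ``matching boost'' $F_{j,k+1}(\bm{g}) \geq F_{i,k}(\bm{g}) + 2$ along a matching edge, and the same two applications of Gaussian concentration (once to keep $F_{i,k}$ near its mean, once to rule out large $F_{j,k+1}$ for nodes with small mean). The only cosmetic difference is in how $U'$ is produced: the paper defines $U'$ as those $j \in \Gamma(U)$ that receive a $\tilde{M}(\bm{g})$-edge with probability $\geq c^2/16$ and then argues each such $j$ has $\mu_{j,k+1} \geq \delta+1$, whereas you define $U'$ directly as $\{j \in \Gamma(U) : \mu_{j,k+1} \geq \delta+1\}$ and prove $|U'|$ is large by your double-counting inequality; both unwind to the same expectation bound on the matching.
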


\begin{proof}
If $|\Gamma(U)| \geq \frac{4}{c} \cdot |U|$, then every node in $i \in \Gamma(U)$ has $\mu_{i,k+1} \geq \delta$ by Lemma~\ref{lem:BehauviorOfMuKIForNeighbors}
and we are in case (B). 
So suppose that $|\Gamma(U)| < \frac{4}{c} \cdot |U|$. Consider the random matching
\[
  \tilde{M}(\bm{g}) := \Big\{ (i,j) \in M(\bm{g}) \mid i \in U\textrm{ and } F_{i,k}(\bm{g}) \geq \delta-\frac{1}{2} \Big\}
\] 
that is the restriction of $M(\bm{g})$ to edges that are going out of $U$ and where  $F_{i,k}(\bm{g})$ is
large enough. 
Note that $\mu_{i,k} = \E_{\bm{g} \sim N^m(0,1)}[F_{i,k}(\bm{g})] \geq \delta$ for all $i \in U$ and
$F_{i,k}$ is $\sqrt{k\Delta}$-Lipschitz. Hence by Lemma~\ref{lem:LipschitzPropertyOfFik} we have
\[
 \Pr_{\bm{g} \sim N^m(0,1)}\Big[F_{i,k}(\bm{g}) < \delta - \frac{1}{2}\Big] \leq 2\exp\Big(-\frac{1}{8k\Delta}\Big) < \frac{c}{2} \quad \forall i \in U.
\]
This implies that each node in $U$ will have an outgoing edge in $\tilde{M}(\bm{g})$
with probability at least $\frac{c}{2}$. 
Now, define $U' := \{ j \in \Gamma(U) \mid \Pr[j\textrm{ has incoming edge from }\tilde{M}(\bm{g})] \geq \frac{c^2}{16} \}$. 
Since $\tilde{M}(\bm{g})$ is a matching we have
\[
\frac{c}{2} \cdot |U|  \leq \E[|\tilde{M}(\bm{g})|] \leq \frac{c^2}{16} \cdot \underbrace{|\Gamma(U) \setminus U'|}_{\leq (4/c) \cdot |U|} + |U'| \leq \frac{c}{4} \cdot |U| + |U'| 
\]
which implies that $|U'| \geq \frac{c}{4} \cdot |U|$. Now fix a node $j \in U'$. 
It remains to argue that $\mu_{j,k+1} \geq \delta+1$ for all $j \in U'$. 
First, condition on the event that $\tilde{M}(\bm{g})$ has an edge incoming to $j$.
We denote that edge by $(i(\bm{g}),j) \in \tilde{M}(\bm{g})$
with $i(\bm{g}) \in U$\footnote{Here we write $i(\bm{g})$ to indicate that this node will depend on the choice of $\bm{g}$.}. 
We know by the definition of $\tilde{M}(\bm{g})$ that $\left<\bm{v}_{j} - \bm{v}_{i(\bm{g})},\bm{g}\right> \geq 2$. Moreover, 
we know that there is a node $h(\bm{g}) \in \Gamma_{k}(i(\bm{g}))$ so that $\left<\bm{v}_{i(\bm{g})} - \bm{v}_{h(\bm{g})},\bm{g} \right> \geq \delta - \frac{1}{2}$. 
\begin{center}
\psset{unit=1.2cm}
\begin{pspicture}(-2,-1.6)(2,1.6)
\pspolygon[linestyle=dashed,linearc=5pt,fillcolor=gray,fillstyle=solid](0.6,-0.25)(-1.8,-1.25)(-1.8,0.75)
\psellipse[fillcolor=lightgray,fillstyle=solid](0,0)(0.5,1.1)
\psellipse[fillcolor=lightgray,fillstyle=solid](2,0)(0.5,1.6)
\psellipse[fillcolor=gray,fillstyle=solid](2,0)(0.4,0.75)
\pspolygon[linestyle=dashed,linearc=5pt,fillcolor=gray,fillstyle=none](0.6,-0.25)(-1.8,-1.25)(-1.8,0.75)
\cnode*(0,0.75){2.5pt}{v1}
\cnode*(0,0.25){2.5pt}{v2}
\cnode*(0,-0.25){2.5pt}{v3}
\cnode*(0,-0.75){2.5pt}{v4}
\cnode*(2,1.0){2.5pt}{w1}
\cnode*(2,0.5){2.5pt}{w2}
\cnode*(2,0){2.5pt}{w3}
\cnode*(2,-0.5){2.5pt}{w4}
\cnode*(2,-1.0){2.5pt}{w5}
\cnode*(-1.5,-0.75){2.5pt}{u1}
\rput[c](-1.5,0.9){$\Gamma_k(i(\bm{g}))$}
\nput[labelsep=2pt]{80}{u1}{$h(\bm{g})$}
\nput[labelsep=2pt]{180}{v3}{$i(\bm{g})$}
\nput[labelsep=2pt]{0}{w3}{$j$}
\ncline[arrowsize=5pt]{->}{v3}{w3} \naput[labelsep=0pt]{$\tilde{M}(\bm{g})$}
\ncline[arrowsize=5pt]{->}{v4}{w4}
\rput[c](0,1.3){$U$}
\rput[c](2,1.75){$\Gamma(U)$}
\nput[labelsep=8pt]{30}{w2}{$U'$}
\end{pspicture}
\end{center}
Then 
$\left<\bm{v}_{j} - \bm{v}_{h(\bm{g})},\bm{g}\right> = \left<\bm{v}_{j} - \bm{v}_{i(\bm{g})},\bm{g}\right> + \left<\bm{v}_{i(\bm{g})} - \bm{v}_{h(\bm{g})},\bm{g}\right> \geq \delta + \frac{3}{2}$. 
In other words, $\Pr_{\bm{g} \sim N^m(0,1)}[F_{j,k+1}(\bm{g}) \geq \delta + \frac{3}{2}] \geq \frac{c^2}{16}$. 
Again, the function $F_{j,k+1}$ is $\sqrt{(k+1)\Delta}$-Lipschitz and 
\[
  \Pr_{\bm{g} \sim N^m(0,1)}\Big[|F_{j,k+1}(\bm{g}) - \mu_{j,k+1}| \geq \frac{1}{2}\Big] \stackrel{\textrm{Lem.~\ref{lem:ConcentrationForLipschitzFunctions}}}{\leq} 2\exp\Big(-\frac{1}{8(k+1)\Delta}\Big) < \frac{c^2}{16}
\]
and hence $\mu_{j,k+1} \geq \delta+1$. This shows the claim. 
\end{proof}

Now, suppose we run Lemma~\ref{lem:ExpansionOrExtensionLemma} iteratively, starting with $U := V$
and in each iteration we replace the current $U$ by the set $U'$. 
We iterate this until the upper bound on $k$ is reached. Note that case (B) cannot happen more often than case (A) 
as always $|\Gamma(U)| \leq n$. 
Then after being $k = \frac{1}{100\Delta}\log(1/c)$ times in Case (A) and  $\ell \in \{ 0,\ldots k\}$ times in Case (B), 
we end up with a set $U \subseteq V$ with $|U| \geq n \cdot (\frac{c}{4})^{k-\ell} \geq n \cdot (\frac{c}{4})^{k}$ and 
$\mu_{i,2k} \geq \mu_{i,k+\ell} \geq k$ for all $ i \in U$. 
On the other hand, $\mu_{i,2k} \leq 10\sqrt{\log n} \cdot \sqrt{2k\Delta}$ by Lemma~\ref{lem:UpperBoundOnMuKI}. 
Choosing $\Delta := \Theta_c(\frac{1}{\sqrt{\log n}})$ and $k := \Theta_c(\sqrt{\log n})$ 
with proper choice of constants, then gives a contradiction. 

\paragraph{Acknowledgement.}

The author is very grateful to James R. Lee, Harishchandra Ramadas, Rebecca Hoberg and Alireza Rezaei for helpful discussion and comments. 

\bibliographystyle{alpha}
\bibliography{LectureNotesOnARV}

\end{document}